\newtheorem{theorem}{Theorem}
\newtheorem{lemma}[theorem]{Lemma}
\newtheorem{cor}{Corollary}
\theoremstyle{definition}
\newtheorem{definition}{Definition}
\newtheorem{claim}{Claim}
\newtheorem{remark}[claim]{Remark}
\newtheorem{example}{Example}
\begin{document}
\title{\bf {\Large On the largest  minimum distances of $[n,6]$  LCD codes}}
\date{}
\author{Yang Liu$^{a,b,\ast}$, Ruihu Li$^{b}$\\
 $^{a}$ Air Defense and antimissile school, Air Force Engineering
 University,\\ Xi'an, Shaanxi 710051,  China.\\
 $^{b}$ Department of Basic Sciences, Air Force Engineering University, \\ Xi'an, Shaanxi 710051,
 China.
(email:$^{\ast}$liu\_yang10@163.com)\\
} \maketitle

\begin{abstract}
Linear complementary dual (LCD) codes can be used to against
side-channel attacks and fault noninvasive attacks. Let $d_{a}(n,6)$
and $d_{l}(n,6)$ be the minimum weights of all binary optimal linear
codes and LCD codes with length $n$ and dimension 6, respectively.
In this article,  we aim to obtain  the values of $d_{l}(n,6)$ for
$n\geq 51$ by investigating the nonexistence and constructions of
LCD codes with given parameters. Suppose that $s \ge 0$ and $0\leq
t\leq 62$ are two integers and $n=63s+t$. Using   the theories of
defining vectors, generalized anti-codes, reduced codes and nested
codes, we exactly determine $d_{l}(n,6)$ for $t \notin
\{21,22,25,26,33,34,37,38,45,46\}$, while we show that $
d_{l}(n,6)\in$$\{d_{a}(n,6)$  $-1,d_{a}(n,6)\}$ for $t\in
\{21,22,26,34,37,38,46\}$ and $ d_{l}(n,6)\in$$ \{d_{a}(n,6)-2,$
$d_{a}(n,6)-1\}$ for $t\in\{25,33,45\}$.
%It is worth  mentioning  that their   maximum designed distance come
%up  to that of the narrow-sense ones.
\medskip

\noindent {\bf Index terms:} optimal linear code, LCD code,
generalized anti-code, defining vector, reduced code
\end{abstract}

\section{\label{sec:level1} Introduction\protect}

\label{sec1} Let $F_{2}^{n}$ be the $n$-dimensional row vector space
over binary field $F_{2}$. A binary linear code $\mathcal{C}=[n,k]$
is a $k$-dimensional subspace of $F_{2}^{n}$. The weight $w(x)$ of a
vector $x\in F_{2}^{n}$ is the number of its nonzero coordinates. If
the minimum weight of nonzero  vectors in $\mathcal{C}$$=[n,k]$ is
$d$, then $d$ is called the minimum distance of $\mathcal{C}$ and we
denote $\mathcal{C}=[n,k,d]$.
 The dual code $\mathcal{C}^{\perp}$ of $\mathcal{C}$ is defined as
$\mathcal{C}^{\perp}$= $\{x\in F_{2}^{n} \mid x\cdot y=xy^{T}=0
 \hbox{ for all}\  y \in  \mathcal{C} \}$.
  The hull (or
radical code) of $\mathcal{C}$ is defined as
$Hu(\mathcal{C})=\mathcal{C}^{\perp}\cap\mathcal{C}$ and
$h(\mathcal{C})=$dim$Hu(\mathcal{C})$ is the hull dimension of
$\mathcal{C}$ \cite{Assmus}.  If a generator matrix of a linear code
is denoted by $G$, then it is easy to know $h(\mathcal{C})=k-$
rank$(G^{T}G)$. A code $\mathcal{C}$ is self-orthogonal (SO) if
 $h(\mathcal{C})=k$. And if  $h(\mathcal{C})=0$,
$\mathcal{C}$ is a linear complementary dual (LCD) code.

LCD cyclic codes were referred to as reversible cyclic codes  by
Massey \cite{massey1964} and could provide an optimal linear coding
solution for the two user binary adder channel. He also  pointed out
that asymptotically good LCD codes exist \cite{massey1992}. Soon
afterwards Yang et al. provided the  condition for a cyclic code
having a complementary dual \cite{Yang1994}. In Ref.
\cite{Sendrirer2004} Sendrier verified that LCD codes can meet the
Gilbert-Varshamov bound. Carlet et al. showed that LCD codes can be
used  to fight against fault noninvasive attacks and  side-channel
attacks \cite{carlet2015}. It shed a new light on LCD codes and
posed more attention to the construction of LCD codes with greatest
possible minimum distance, which can improve the resistance against
those two attacks. Since then, much work aimed at determining the
upper bound of the minimum distances of LCD codes and constructing
LCD codes with  best parameters, see
\cite{carlet2018,Grassl,Galvez2018,Harada2019,Fu2019,Araya2020,Bouyuklieva2021,Wang2023,liuar,Shi22,Araya2021char,Araya2021}.

In this paper,  we use $d_{a}(n,k)$ to denote the distance of
optimal linear code for given $n, k$, and $d_{l}(n,k)$ to denote the
largest distance of LCD codes for given $n, k$. A linear $[n, k]$
code $\mathcal{C}$ is optimal if $\mathcal{C}$ has the greatest
minimum weight $d_{a}(n,k)$ among all linear $[n, k]$ codes. And a
linear $[n,k,d]$ code with $d=d_{a}(n,k)-1$ is called a near optimal
code.
 If an LCD $[n, k]$ code with the largest minimum weight
$d_{l}(n,k)$ among all LCD $[n, k]$ codes, then it is an optimal LCD
code. And an LCD $[n,k,d]$ code with $d=d_{l}(n,k)-1$ is called a
near optimal LCD code.

Ref. \cite{carlet2018} showed that any code over $F_{q}$ is
equivalent to an LCD code for $q \geq 4$. This  suggests us pay more
attention on LCD codes over $F_{q}$ for $q=2, 3$. Here we only
consider binary LCD codes.
 In latest years,
constructions of optimal  LCD  codes with short lengths or low
dimensions are discussed, and the lower and upper bounds for
$d_{l}(n,k)$ have been established  in
\cite{Grassl,Galvez2018,Harada2019,Fu2019,Araya2020,Bouyuklieva2021,Wang2023,liuar,Shi22,Araya2021char,Araya2021}.
For $n\leq 24$ and $1\leq k\leq n$, all  $d_{l}(n,k)$ were
determined.  For $k\leq n\leq 40$, most of $d_{l}(n,k)$ were
determined.

For $k\leq 5$, all $d_{l}(n,k)$ were obtained in
\cite{Harada2019,Fu2019,Araya2020,liuar,Araya2021char,Araya2021}.
 As for larger $n$ and
$k\geq 6$, only a few results of $d_{l}(n,k)$ are known, see
\cite{Galvez2018,Harada2019,Fu2019,Araya2020,Fu2019,Bouyuklieva2021,Wang2023,Li2022,Li2023}.
When $k=6$, $d_{l}(n,6)$ was given for $1\leq n \leq 12$ in
\cite{Galvez2018}, for $13\leq n \leq 16$ in \cite{Harada2019}, for
$17\leq n \leq 24$ in \cite{Fu2019,Araya2020}, for $25\leq n \leq
40$ in \cite{Fu2019,Bouyuklieva2021}, for $41\leq n \leq 50$ in
\cite{Wang2023}.

Recently,  Li {\it et al.}  introduced a new concept ``generalized
anti-code" to verify that 11 classes of binary optimal codes
$[2^{k-1}s + 2^{k}-a, k]$ are not LCD codes for given $k$ and $a$ in
\cite{Li2023} (Lemma \ref{lilem} in this article).  The new approach
is useful to study optimal LCD codes with higher dimensions. For
example, they showed that some optimal linear codes are not LCD for
$k=6$ and $52\leq n\leq 62$. However, they hadn't accurately
provided the upper bound of the minimum distances of optimal linear
codes.

Motivated by the above results, the objective of this paper is to
investigate  the values of $d_{l}(n,6)$ for all lengths $n\geq 51$.
%In order to determine the minimum distance of optimal $[n=63s+t, 6]$
%LCD codes, the following two steps will be used.
This paper is organized as follows. In Section \ref{sec2}, some
definitions, notations and basic results  about  defining vectors,
generalized anti-codes and reduced codes will be  are given. In
Sections \ref{sec3} and \ref{sec4},  we will determine the minimum
distances of optimal $[n=63s+t, 6]$ LCD codes by the following two
steps:

Step1:  We  infer that many (near) optimal linear codes with given
lengths are not LCD, which will be showed in Section \ref{sec3}.

Step2:  We construct some (near) optimal  LCD codes with parameters
$[63s+t,6,d_{l}]$  from some nested code chains, which will be
exhibited in Section \ref{sec4}.

In final section  we will  provide a table to conclude the above
results and give  the minimum distances of optimal $[n=63s+t, 6]$
LCD codes.

\section{Preliminaries}\label{sec2}
In this section,  some concepts, lemmas and   notations about
defining vectors, generalized anti-codes and reduced codes will be
introduced. For clarity, they are divided into the following four
subsections.

\subsection{Basic knowledge}

\begin{lemma}(Griesmer Bound)\cite{bro0}\quad
The length, dimension and minimum distance for all linear $[n, k,
d]$ codes over $F_{q}$ achieve the following relation:
$$n\geq \sum_{i=0}^{k-1}\lceil\frac{d}{q^{i}}\rceil$$
\end{lemma}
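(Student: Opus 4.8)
The plan is to argue by induction on the dimension $k$, using the standard residual-code construction. The base case $k=1$ is immediate: a nonzero codeword occupies at most $n$ coordinates, so $d\le n=\lceil d/q^{0}\rceil$. For the inductive step I would take an $[n,k,d]$ code $\mathcal{C}$, fix a minimum-weight codeword $c$ of weight $d$, and---after permuting coordinates and rescaling columns so that $c=(1,\dots,1,0,\dots,0)$ with exactly $d$ leading ones---define the residual code $\mathcal{C}'$ to be the projection of $\mathcal{C}$ onto the last $n-d$ coordinates. The goal is to show that $\mathcal{C}'$ is an $[\,n-d,\,k-1,\,d'\,]$ code with $d'\ge\lceil d/q\rceil$, and then to feed these parameters into the induction hypothesis.

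Both the dimension claim and the distance claim rest on a single averaging observation. Given any $x\in\mathcal{C}$, consider the $q$ codewords $x-\alpha c$ for $\alpha\in F_{q}$. Since $c$ vanishes on the last $n-d$ coordinates, all of these agree with $x$ there and hence share a common weight $w'$ on that block; meanwhile, on the first $d$ coordinates the total weight of $x-\alpha c$, summed over all $\alpha$, equals $(q-1)d$, because each of the $d$ leading positions of $x$ matches $\alpha$ for exactly one scalar. Therefore some choice of $\alpha$ makes the weight of $x-\alpha c$ on the first $d$ coordinates at most $(q-1)d/q$. For that $\alpha$, if $x-\alpha c\neq 0$ then its total weight is $\ge d$ by minimality, forcing $w'\ge d-(q-1)d/q=d/q$, and integrality gives $w'\ge\lceil d/q\rceil$. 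Applying this to any $x$ whose projection is a nonzero codeword of $\mathcal{C}'$ yields $d'\ge\lceil d/q\rceil$. The same averaging also pins down the kernel of the projection: if $x$ is supported on the first $d$ coordinates (so $w'=0$), the chosen $\alpha$ makes $x-\alpha c$ have weight strictly below $d$, hence $x-\alpha c=0$ and $x\in\langle c\rangle$; thus the kernel is exactly $\langle c\rangle$ and $\dim\mathcal{C}'=k-1$.

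Finally, applying the induction hypothesis to $\mathcal{C}'$ gives $n-d\ge\sum_{i=0}^{k-2}\lceil d'/q^{i}\rceil$, and combining $d'\ge\lceil d/q\rceil$ with the nested-ceiling identity $\lceil\lceil d/q\rceil/q^{i}\rceil=\lceil d/q^{i+1}\rceil$ shows the right-hand side is at least $\sum_{i=1}^{k-1}\lceil d/q^{i}\rceil$. Adding back the $d=\lceil d/q^{0}\rceil$ deleted coordinates yields $n\ge\sum_{i=0}^{k-1}\lceil d/q^{i}\rceil$, completing the induction. The one genuinely substantive step is the residual-distance bound $d'\ge\lceil d/q\rceil$; everything else is bookkeeping together with the ceiling identity. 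The averaging over the $q$ translates $x-\alpha c$ is precisely what forces that bound, so I expect it to be the crux of the argument.
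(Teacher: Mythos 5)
Your proof is correct: the residual-code induction, the averaging over the $q$ translates $x-\alpha c$ to get $d'\ge\lceil d/q\rceil$, the identification of the projection's kernel with $\langle c\rangle$, and the nested-ceiling identity $\lceil\lceil d/q\rceil/q^{i}\rceil=\lceil d/q^{i+1}\rceil$ are all exactly the standard argument for the Griesmer bound. The paper itself gives no proof of this lemma --- it is quoted from the literature (Brouwer's chapter in the Handbook of Coding Theory) --- so there is nothing to compare against; your write-up is a complete and correct derivation of the cited result.
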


It should be noted that all linear codes meeting the Griesmer Bound
are optimal codes while not all optimal codes can achieve  the
Griesmer Bound.

Two binary codes $\mathcal{C}$ and $\mathcal{C}'$ are equivalent  if
one can be obtained from the other by permuting the coordinates, we
denote them as  $\mathcal{C}$ $ \cong$ $\mathcal{C}'$.
  If two matrices  $G_{1}$ and $G_{2}$ generate equivalent codes,  we denote them as $G_{1}\cong$ $G_{2}$.
Two $k\times n$ matrices $G_{1}\cong$ $G_{2}$ if and only if there
are an invertible matrix $P$ and a permutation $\Pi$ on
$\{1,2,\cdots,n\}$ such that $G_{2}=PG_{1}\Pi$ \cite{Huffman}.

\subsection{Definitions about the defining vector}

 We  use
$\bf{1_{n}}$=$(1,1,\cdots,1)_{1\times n}$ and
$\bf{0_{n}}$=$(0,0,...,0)_{1\times n}$ to denote the all-one vector
and the zero vector of length $n$, respectively. Let $J_{k}$ be the
$(2^{k}-1)\times (2^{k}-1)$ all-one matrix. And use
$iG=(G,G,\cdots,G)$ to denote the juxtaposition  of $i$ copies of
$G$ for given matrix $G$. We only consider linear codes  without
zero coordinates and matrices without zero columns.

Let $\alpha_{i}$ be the binary column vector representation of $i$
for $1\leq i\leq N=2^{k}-1$,  i.e.,
$\alpha_{1}$=$(1,0,\cdots,0)^{T}$,
$\alpha_{2}$=$(0,1,\cdots,0)^{T}$, $\cdots$,
$\alpha_{N}$=$(1,1,\cdots,1)^{T}$. Then the matrix
$\mathbf{S}_{k}=(\alpha_{1},\alpha_{2},\cdots,\alpha_{N})$ generates
the $k$-dimensional simplex code
$\mathcal{S}_{k}=[2^{k}-1,k,2^{k-1}] $. Recall the matrix
$\mathbf{S}_{k}$ can be constructed inductively \cite{Zuo2011,Zuo2}.
Let
\begin{center}
$\scriptsize{\mathbf{S}_{2}=\left( \begin{array}{lllllll}
101\\
011\\
\end{array} \right)}$, $\scriptsize{\mathbf{S}_{3} =\left(
\begin{array}{ccccccccccccc}
\mathbf{S}_{2}& \mathbf{0}^{T}_{2}&\mathbf{S}_{2}\\
 \mathbf{0}_{3}&1&\mathbf{1}_{3}\\
\end{array} \right)}$
\end{center}
and recursively, we have
\begin{center}
 $\scriptsize{\mathbf{S}_{k+1}=\left(
\begin{array}{ccccccccccccc}
\mathbf{S}_{k}& \mathbf{0}^{T}_{k}&\mathbf{S}_{k}\\
 \mathbf{0}_{2^{k}-1}&1&\mathbf{1}_{2^{k}-1}\\
\end{array} \right)}$.
\end{center}

 Let $G$ be a  matrix
of size $k\times n$. If the columns of  $G$ contain $l_{i}$ copies
of $\alpha_{i}$ for $1\leq i\leq N$, we denote $G$ as
$G=(l_{1}\alpha_{1},l_{2}\alpha_{2},\cdots,l_{N}\alpha_{N})$ for
short, and call $L=(l_{1},l_{2},\cdots,l_{N})$  the  {\bf defining
vector} of $G$ \cite{Li2008}. Let $l_{max}$ and $l_{min}$ be the
largest and the smallest values of all $l_{i}$ for $1\leq i\leq
N=2^{k}-1$, respectively.

Suppose $L=(l_{1},l_{2},\cdots,l_{N})$, let $l_{j_{l}}$ ($1\leq
l\leq t$) be different coordinates of $L$ with $l_{j_{1}}<
l_{j_{2}}<\cdots<l_{j_{t}}$. Defining $m_{l}$ as the number of equal
$l_{j_{l}}$, we say $L$ is of {\it type}
$$]](l_{j_{1}})_{m_{1}}\mid (l_{j_{2}})_{m_{2}}\mid
\cdots\mid(l_{j_{t}})_{m_{t}}]].$$

 If  $\mathcal{C}=[n,k]$ has a generator matrix
 $G=(l_{1}\alpha_{1},\cdots,l_{N}\alpha_{N})$,
the minimum distance $d$ of $\mathcal{C}$ and its codewords weight
can be determined by its defining vector $L$ and some special
matrices $P_{k}$ and $Q_{k}$ from simplex codes.

Let $P_{2}$ be a $(2^{2}-1)\times (2^{2}-1)$ matrix whose rows are
the non-zero codewords of $\mathcal{S}_{2} $. Construct

$$P_{2}=\left( \begin{array}{ccccccc}
101\\
011\\
110\\
\end{array} \right),
P_{3}= \left(
\begin{array}{ccccccc}
P_{2}&0&P_{2}\\
\bf{0_{3}}&1& \bf{1_{3}}\\
P_{2}&\bf{1^{T}_{3}}&(J_{2}-P_{2}).\\
\end{array}
\right)$$

 Then the seven rows of $P_{3}$  are just  the seven nonzero vectors
of the simplex code $\mathcal{S}_{3}$  $=[7, 3, 4] $.
 For $k\geq 3$, using the recursive
method, let $P_{k}$ be a $(2^{k}-1)\times (2^{k}-1) $ matrix whose
rows are just the $2^{k}-1 $ nonzero codewords of $k$-dimensional
binary simplex code. Then the matrix formed by nonzero codewords of
$k+1$-dimensional simplex code is as follows:\\
$$P_{k+1}= \left(
\begin{array}{ccccccc}
P_{k}&\bf{0^{T}_{2^{k}-1}}&P_{k}\\
\bf{0_{2^{k}-1}}&1& \bf{1_{2^{k}-1}}\\
P_{k}&\bf{1^{T}_{2^{k}-1}}&(J_{k}-P_{k})\\
\end{array}
\right).$$\\

Let $Q_{k}=J_{k}-P_{k}$, Each row of $Q_{k}$ has $2^{k-1}-1$ ones
and $2^{k-1}$ zeros. According to Refs. \cite{Zuo2011,Zuo2,Li2008},
the matrix $P_{k}$ is invertible over the rational field and
$$P^{-1}_{k}=\frac{1}{2^{k-1}}(J_{k}-2Q_{k})=\frac{1}{2^{k-1}}(2P_{k}-J_{k}).$$

 For an $[n,k,d]$ code with defining vector
$L=(l_{1},l_{2},\cdots,l_{N})$, the weights of its $2^{k}-1$ codes
form a vector $W=(w_{1},w_{2},\cdots,w_{N})$, called  its weight
vector. Set
$W=(w_{1},w_{2},\cdots,w_{N})=d\bf{1_{2^{k}-1}}+\Lambda$, where
$\Lambda=(\lambda_{1},\lambda_{2},\cdots,\lambda_{N})$ with
$\lambda_{i}=w_{i}-d\geq 0$ and at least one $\lambda_{i}=0$. Denote
$\sigma=\lambda_{1}+\lambda_{2}+\cdots+\lambda_{N}$. Then the weight
vector $W$ and defining vector $L$ have the following relations:
\begin{eqnarray*}
% \nonumber to remove numbering (before each equation)
W^{T}&=&P_{k}L^{T}, \sigma=2^{k-1}n-d(2^{k}-1)\\
  L^{T} &=&
  P^{-1}_{k}W^{T}=\frac{1}{2^{k-1}}(J_{k}-2Q_{k})(d\bf{1_{2^{k}-1}}+\Lambda)^{T}\\
  &=&\frac{1}{2^{k-1}}((d+\sigma){\bf 1}_{2^{k}-1}^{T}-2Q_{k}\Lambda^{T})
\end{eqnarray*}

For an $[n,k,d]$ code with defining vector
$L=(l_1,l_2,\cdots,l_{2^{k}-1})$, one can infer from the above
equations that
$$\lfloor \frac{1}{2^{k-1}}(d+\sigma)\rfloor\leq l_{i} \leq \lceil \frac{1}{2^{k-1}}(d-\sigma)\rceil,$$
which will be frequently used in the process of verifying the
 below conclusions.

\subsection{Definitions about the generalized anti-code}

Binary anti-codes were first introduced  by Farrell in
\cite{Far1970}. To verify  the nonexistence of binary optimal LCD
codes $[2^{k-1}s + 2^{k}-a, k]$  for given $k$ and $a$,  Li {\it et
al.} proposed some concepts about the generalized anti-code in
\cite{Li2023}.

\begin{definition}\cite{Li2023} Suppose  $\mathcal{C}=[n,k]$ has a generator matrix
$G=(l_{1}\alpha_{1},\cdots,l_{N}\alpha_{N})$ and its defining vector
is $L=(l_{1},\cdots,l_{N})$.
 If $l_{max}=a$, then $G$ is a sub-matrix of $aS_{k}$,
$L^{c}=((a-l_{1}),\cdots,(a-l_{N}))$ is called the anti-vector of
$L=(l_{1},\cdots,l_{N})$ and it is the defining vector of $G^{c}=((
 a-l_{1})\alpha_{1},\cdots,(a-l_{N})\alpha_{N})$.
 $G^{c}$  is also a sub-matrix of $aS_{k}$ and called an anti-matrix of $G$.
A linear code generated by $G^{c}$ is called as a {\bf generalized
anti-code}. If the largest weight of these $2^{k}$-vectors is
$\delta$, this generalized  anticode is denoted as
$\mathcal{C}^{a}=(m,2^{k},\{\delta\})$.
\end{definition}

% If $M=G^{c}=((a-l_{1})\alpha_{1},\cdots,(a-l_{N})\alpha_{N})$ is a
%sub-matrix of $(aS_{k})$ with size $k\times m$ and not a sub-matrix
%of $((a-1)S_{k})$,  $\mathcal{M}$  is called as $a$-generalized
%anticode. A $1$- generalized anti-code   is a anticode given in [14
%Fer76].

Parameters of $\mathcal{C}=[n,k]$ with generator matrix $G$ can be
derived from its  generalized anti-code     as follows.

\begin{lemma}\cite{Li2023} \label{li2023lemma}   Let
$G=(l_{1}\alpha_{1},l_{2}\alpha_{1},\cdots,l_{N}\alpha_{N})$ be a
generator matrix of $\mathcal{C}=[n,k,d]$, where  $N=2^{k}-1$. Set
$a=max_{1\leq i\leq N}\{l_{i}\}$, $n<aN$,  $m=aN-n$ and
$l^{c}_{i}=a-l_{i}$. If
$G^{c}=(l^{c}_{1}\alpha_{1},l^{c}_{2}\alpha_{1},\cdots,l^{c}_{N}\alpha_{N})$
generates a  generalized anti-code
$\mathcal{C}^{a}=(m,2^{k},\{\delta\})$, then $d=a2^{k-1}-\delta$ and
$GG^{T}=G^{c}(G^{c})^{T}$ for $k\ge 3$.
\end{lemma}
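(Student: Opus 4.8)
The plan is to exploit the single structural fact that the columns of $G$ and of $G^{c}$ together comprise exactly $a$ copies of the simplex generator $\mathbf{S}_{k}=(\alpha_{1},\dots,\alpha_{N})$: for each column type $\alpha_{i}$ the matrix $G$ supplies $l_{i}$ copies and $G^{c}$ supplies $a-l_{i}$ copies, so the combined multiplicity is $a$. The computational lemma I would isolate first is that, for any fixed nonzero $u\in F_{2}^{k}$, the functional $\alpha\mapsto u\cdot\alpha$ is surjective onto $F_{2}$, hence takes the value $1$ on exactly $2^{k-1}$ of the $N=2^{k}-1$ nonzero columns $\alpha_{i}$ (and the value $0$ on the remaining $2^{k-1}-1$).

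For the distance formula I would compute weights codeword by codeword. The codeword $uG$ has weight $w(uG)=\sum_{i:\,u\cdot\alpha_{i}=1}l_{i}$, and likewise $w(uG^{c})=\sum_{i:\,u\cdot\alpha_{i}=1}(a-l_{i})$. Adding these and using that the index set has cardinality $2^{k-1}$ gives $w(uG)+w(uG^{c})=a\,2^{k-1}$ for every nonzero $u$. Minimising over $u\neq 0$ yields $d=\min_{u\neq 0}w(uG)=a\,2^{k-1}-\max_{u\neq 0}w(uG^{c})$. Since the zero codeword of the anti-code has weight $0$, the largest weight $\delta$ of $\mathcal{C}^{a}$ is attained at some nonzero $u$, so $\max_{u\neq 0}w(uG^{c})=\delta$ and therefore $d=a\,2^{k-1}-\delta$.

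For the Gram-matrix identity, which I read as an equality over $F_{2}$ (this is the object whose $F_{2}$-rank governs the hull dimension $k-\mathrm{rank}(GG^{T})$), I would write $GG^{T}=\sum_{i=1}^{N}l_{i}\,\alpha_{i}\alpha_{i}^{T}$ and $G^{c}(G^{c})^{T}=\sum_{i=1}^{N}(a-l_{i})\,\alpha_{i}\alpha_{i}^{T}$, so that
$$GG^{T}-G^{c}(G^{c})^{T}=\sum_{i=1}^{N}(2l_{i}-a)\,\alpha_{i}\alpha_{i}^{T}\equiv a\sum_{i=1}^{N}\alpha_{i}\alpha_{i}^{T}=a\,\mathbf{S}_{k}\mathbf{S}_{k}^{T}\pmod 2,$$
where I used $2l_{i}\equiv 0$ and $-a\equiv a$. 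It then remains to evaluate the simplex Gram matrix $\mathbf{S}_{k}\mathbf{S}_{k}^{T}$: its $(p,q)$ entry counts the nonzero length-$k$ columns carrying a $1$ in both positions $p$ and $q$, which is $2^{k-1}$ when $p=q$ and $2^{k-2}$ when $p\neq q$. Every entry is thus divisible by $2^{k-2}$, and for $k\ge 3$ this factor is even, whence $\mathbf{S}_{k}\mathbf{S}_{k}^{T}\equiv 0\pmod 2$ and $GG^{T}=G^{c}(G^{c})^{T}$ over $F_{2}$.

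The one genuine subtlety, and the step I would foreground, is the role of the hypothesis $k\ge 3$: the identity fails for $k=2$ precisely because $2^{k-2}=1$ is then odd, so the off-diagonal entries of $\mathbf{S}_{2}\mathbf{S}_{2}^{T}$ survive modulo $2$. I would also make explicit that the claim is an $F_{2}$-identity and not an integer one, since over the integers the difference $a\,\mathbf{S}_{k}\mathbf{S}_{k}^{T}$ is patently nonzero; everything else reduces to the two bookkeeping counts above.
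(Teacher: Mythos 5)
Your proof is correct. Note that the paper itself offers no proof of this lemma: it is quoted verbatim from \cite{Li2023} and used as a black box, so there is nothing in-text to compare your argument against. Your two computations are exactly the ones the statement rests on: the weight complementation $w(uG)+w(uG^{c})=a\,2^{k-1}$ for nonzero $u$ (which follows from the columns of $G$ and $G^{c}$ jointly forming $a$ copies of $\mathbf{S}_{k}$ and from each nonzero functional hitting $1$ on exactly $2^{k-1}$ nonzero columns), and the identity $GG^{T}-G^{c}(G^{c})^{T}\equiv a\,\mathbf{S}_{k}\mathbf{S}_{k}^{T}\pmod 2$ together with the observation that every entry of $\mathbf{S}_{k}\mathbf{S}_{k}^{T}$ is $2^{k-1}$ or $2^{k-2}$, hence even precisely when $k\ge 3$. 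Your remark that the Gram identity must be read over $F_{2}$, and that $k=2$ genuinely fails because the off-diagonal entries equal $1$, is a worthwhile clarification that the paper's terse citation leaves implicit.
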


 Combining Ref. \cite{massey1992} with Ref. \cite{Li2023}, equivalent conditions for an LCD code are given as
follows.

\begin{lemma} \cite{massey1992,Li2023} \label{li3} Let $k\geq 3$, $\mathcal{C}=[n,k]$  be a linear
code with  a generator matrix $G$  and  a parity check matrix $H$.
If the anti-matrix of  $G$ is $G^{c}$, then the following properties
are equivalent:

 \noindent (1)   $\mathcal{C}$ is LCD;

 \noindent (2)  the matrix $HH^{T}$ is invertible;

 \noindent (3)  the matrix $GG^{T}$ is invertible;

\noindent  (4)  the matrix $G^{c}(G^{c})^{T}$ is invertible.
\end{lemma}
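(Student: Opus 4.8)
The plan is to use the single underlying fact that $\mathcal{C}$ is LCD precisely when its hull is trivial, and to reduce every one of the four conditions to a rank statement about a symmetric matrix. I would organize the argument as $(1)\Leftrightarrow(3)$, then $(1)\Leftrightarrow(2)$, and finally $(3)\Leftrightarrow(4)$; since $(2)$, $(3)$ and $(4)$ are each shown equivalent to $(1)$, all four are then mutually equivalent.

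First I would establish $(1)\Leftrightarrow(3)$, which is Massey's characterization. Writing a codeword as $xG$ with $x\in F_{2}^{k}$, such a word lies in $\mathcal{C}^{\perp}$ iff it is orthogonal to every row of $G$, that is, iff $xGG^{T}=0$. Because $G$ has rank $k$, the map $x\mapsto xG$ is injective, so $Hu(\mathcal{C})=\mathcal{C}\cap\mathcal{C}^{\perp}$ is the image under this map of the left null space of $GG^{T}$. Hence $h(\mathcal{C})=k-\mathrm{rank}(GG^{T})$, exactly the formula recorded in Section~\ref{sec1}, and $\mathcal{C}$ is LCD iff $GG^{T}$ has full rank $k$, i.e. iff $GG^{T}$ is invertible.

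Next I would derive $(1)\Leftrightarrow(2)$ by passing to the dual code. The hull is self-dual in the sense that $Hu(\mathcal{C})=\mathcal{C}\cap\mathcal{C}^{\perp}=\mathcal{C}^{\perp}\cap(\mathcal{C}^{\perp})^{\perp}=Hu(\mathcal{C}^{\perp})$, so $\mathcal{C}$ is LCD iff $\mathcal{C}^{\perp}$ is LCD. Since $H$ is a generator matrix of $\mathcal{C}^{\perp}$, applying the equivalence $(1)\Leftrightarrow(3)$ to $\mathcal{C}^{\perp}$ in place of $\mathcal{C}$ shows that $\mathcal{C}^{\perp}$ is LCD iff $HH^{T}$ is invertible. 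Chaining these gives $(1)\Leftrightarrow(2)$.

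Finally $(3)\Leftrightarrow(4)$ is immediate from Lemma~\ref{li2023lemma}, which supplies the identity $GG^{T}=G^{c}(G^{c})^{T}$ valid for $k\ge 3$; two equal matrices are invertible simultaneously. This is also the only place the hypothesis $k\ge 3$ is needed, and it is the step that carries the real weight, since that anti-code identity is the genuinely nontrivial input, which I would treat as given. The remaining delicate point, and the one I would check most carefully, is the dual-code step: $GG^{T}$ and $HH^{T}$ have different sizes, so one cannot compare them directly and must instead route the equivalence through the intrinsic, basis-free statement $h(\mathcal{C})=0$, using the self-duality of the hull rather than any matrix identity linking $G$ and $H$.
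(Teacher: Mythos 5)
Your proof is correct. Note that the paper itself gives no proof of this lemma --- it is quoted from \cite{massey1992} and \cite{Li2023} --- but your argument is exactly the standard one behind those citations: the rank identity $h(\mathcal{C})=k-\mathrm{rank}(GG^{T})$ gives $(1)\Leftrightarrow(3)$, applying it to $\mathcal{C}^{\perp}$ together with $Hu(\mathcal{C})=Hu(\mathcal{C}^{\perp})$ gives $(1)\Leftrightarrow(2)$, and the identity $GG^{T}=G^{c}(G^{c})^{T}$ from Lemma \ref{li2023lemma} (the only place $k\ge 3$ is used) gives $(3)\Leftrightarrow(4)$.
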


The following two examples can make these previous concepts
understandable.

\begin{example} If a linear code has a defining vector $L_{1}$ and
 $L_{1}=(s+1,s-1,s,s,s+1,s-1,s+1)$ is of type $]](s-1)_{2}\mid
(s)_{2}\mid(s+1)_{3}]]$. One can know $l_{max}=s+1$ is the largest
value of all $l_{i}$ for $1\leq i\leq N=2^{3}-1=7$, then $G$ is a
sub-matrix of $(s+1)S_{3}$ and
$L_{1}^{c}=((s+1-l_{1}),\cdots,(s+1-l_{7}))=(0,2,1,1,0,2,0)$ is  the
anti-vector of $L=(l_{1},\cdots,l_{7})$. Here $L_{1}^{c}$ is the
defining vector of
$G^{c}=(3\alpha_{1},\alpha_{2},\cdots,\alpha_{7})$ and of type
$]](2)_{2}\mid (1)_{2}\mid(0)_{3}]].$
\end{example}

 \begin{example} If a linear code has a defining vector $L_{2}=(3,1,1,3,1,3,1)$, then one can know  $L_{2}$ is
of type $]](1)_{4}\mid (3)_{3}]]$ and it is easy to derive that
$L_{2}^{c}$ is of type $]](2)_{4}\mid(0)_{3}]]$ similar to the above
example. From the form of $L_{2}^{c}$,  one can  infer that a code
with defining vector $L_{2}$ is an SO code owing to rank
$(GG^{T})$=rank$(G^{c}(G^{c})^{T})$=0. Naturally, it is not LCD by
lemma \ref{li3}.

\end{example}

\subsection{Some preliminary results}

The hull $h$ and $d_l$  of a linear code with parameters $[n,k]$ can
be estimated from extended codes or codes with lower dimensions,
which are given as follows.

\begin{definition}\cite{Li2022} Let $G$, $G_{1}$ be  generator matrices of
$\mathcal{C}=[n, k,d]$ and $\mathcal{C}_{1}=[n-m, k-1,\geq d]$,
respectively. If
$$G=\left(\begin{array}{ccccc}
                   \mathbf{1}_{m}  & u \\
                      \mathbf{0}_{m} &G_{1}\\
                    \end{array}
                 \right),$$ \\
then $\mathcal{C}_{1}$ is called a {\bf reduced code} of
$\mathcal{C}$.
\end{definition}

\begin{lemma}\cite{Li2022}\label{li} If $\mathcal{C}_{1}$ is a reduced code of
$\mathcal{C}=[n, k,d]$ and $h(\mathcal{C}_{1})=r\geq 2$, then
$h(\mathcal{C})\geq r-1$ and $\mathcal{C}$ is not an LCD code.
\end{lemma}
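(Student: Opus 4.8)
The plan is to reduce the whole statement to a single rank computation on the Gram matrix $GG^{T}$, exploiting the fact (consistent with Lemma \ref{li3}) that $\mathcal{C}$ is LCD if and only if $GG^{T}$ is invertible, and more generally that $h(\mathcal{C})=k-\mathrm{rank}(GG^{T})$: a codeword $xG$ lies in the hull precisely when $x\,GG^{T}=\mathbf{0}$, and $G$ has full row rank $k$, so the hull dimension equals the nullity of $GG^{T}$. The point of the reduced-code construction is that the block shape of $G$ forces $GG^{T}$ to carry the Gram matrix $G_{1}G_{1}^{T}$ of $\mathcal{C}_{1}$ as a large diagonal block.

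First I would simply multiply out
$$G=\begin{pmatrix}\mathbf{1}_{m}&u\\ \mathbf{0}_{m}&G_{1}\end{pmatrix},
\qquad
GG^{T}=\begin{pmatrix} m+uu^{T} & uG_{1}^{T}\\ G_{1}u^{T} & G_{1}G_{1}^{T}\end{pmatrix},$$
a $k\times k$ symmetric matrix whose lower-right $(k-1)\times(k-1)$ block is exactly $G_{1}G_{1}^{T}$. By hypothesis $h(\mathcal{C}_{1})=(k-1)-\mathrm{rank}(G_{1}G_{1}^{T})=r$, so $\mathrm{rank}(G_{1}G_{1}^{T})=k-1-r$.

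The key step is a bordering estimate. Write $GG^{T}=B+E$, where $B$ has $G_{1}G_{1}^{T}$ in its lower-right corner and zeros elsewhere, and the correction $E$ collects the scalar corner $m+uu^{T}$ together with the vectors $uG_{1}^{T}$ and $G_{1}u^{T}$. Since $E$ is supported on a single row and a single column, $\mathrm{rank}(E)\le 2$, and subadditivity of rank gives
$$\mathrm{rank}(GG^{T})\le \mathrm{rank}(G_{1}G_{1}^{T})+2=(k-1-r)+2=k-r+1.$$
Substituting into $h(\mathcal{C})=k-\mathrm{rank}(GG^{T})$ yields $h(\mathcal{C})\ge k-(k-r+1)=r-1$, which is the first assertion. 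Finally, because $r\ge 2$ we obtain $h(\mathcal{C})\ge 1>0$, so $\mathcal{C}$ is not LCD by Lemma \ref{li3}.

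I expect the only point that needs genuine care to be the bordering estimate: one must check that passing from the interior block $G_{1}G_{1}^{T}$ to the full matrix $GG^{T}$ can raise the rank by at most $2$ (one unit for the adjoined row, one for the adjoined column) and that the scalar corner $m+uu^{T}$ over $F_{2}$ does not disturb this count. Everything else is mechanical: the block multiplication and the arithmetic linking the two hull dimensions. It is also worth noting that the hypothesis $r\ge 2$ is used only at the very end, precisely to convert the bound $h(\mathcal{C})\ge r-1$ into the strict positivity that obstructs the LCD property; for $r=1$ the conclusion $h(\mathcal{C})\ge 0$ would be vacuous.
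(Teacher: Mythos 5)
Your argument is correct and complete. Note that the paper itself gives no proof of this lemma --- it is imported verbatim from \cite{Li2022} --- so there is no internal argument to compare against; what you have written is exactly the standard proof one would expect that reference to contain. The two ingredients are both sound: the identity $h(\mathcal{C})=k-\mathrm{rank}(GG^{T})$ follows because $xG$ lies in the hull iff $xGG^{T}=\mathbf{0}$ and $G$ has full row rank (this identity is also asserted in the paper's introduction), and your bordering estimate is valid --- $GG^{T}$ differs from the zero-padded copy of $G_{1}G_{1}^{T}$ by a matrix supported on one row and one column, which has rank at most $2$, so subadditivity gives $\mathrm{rank}(GG^{T})\le\mathrm{rank}(G_{1}G_{1}^{T})+2$. (Equivalently and slightly more directly: deleting the first row and then the first column of $GG^{T}$ lowers the rank by at most one each time.) The scalar corner $m+uu^{T}$ over $F_{2}$ indeed plays no role in the count, and your closing observation that the hypothesis $r\ge 2$ is needed only to turn $h(\mathcal{C})\ge r-1$ into $h(\mathcal{C})\ge 1$ is exactly right.
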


%\begin{lemma} If $d$ is odd,  $\mathcal{C}^{e}$ is an extended
%code of $\mathcal{C}=[n, k,d]$ and $h(\mathcal{C}^{e})=r\geq 2$,
%then $\mathcal{C}$ is not an LCD code.
%\end{lemma}

\begin{lemma} \cite{Bouyuklieva2021}\label{Bou} If $k$ is even and
$d_{l}(n,k)$ is odd, then $d_{l}(n+1, k)\geq d_{l}(n, k)+1$.
\end{lemma}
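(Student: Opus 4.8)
The plan is to manufacture, from an optimal LCD code realizing $d_l(n,k)$, an LCD code of length $n+1$ whose minimum distance is one larger, since the mere existence of such a code yields $d_l(n+1,k)\ge d_l(n,k)+1$. So I would fix an LCD code $\mathcal{C}=[n,k,d]$ with $d=d_l(n,k)$ odd and $k$ even, let $G$ have rows $g_1,\dots,g_k$, and pass to the parity-extended code $\hat{\mathcal{C}}$ with generator matrix $\hat G=(G\mid\sigma)$, where $\sigma$ is the column vector with entries $\sigma_i=w(g_i)\bmod 2$. Then $m\hat G=(mG,\,m\sigma)$ and the appended bit $m\sigma$ is exactly the parity of $mG$, so $\hat{\mathcal{C}}$ is the overall parity-check extension of $\mathcal{C}$.

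The distance bound is the easy half. Appending the parity bit preserves the weight of an even-weight codeword and increases that of an odd-weight codeword by one. Since every nonzero word of $\mathcal{C}$ has weight at least $d$ and $d$ is odd, an even-weight word already has weight $\ge d+1$, while an odd-weight word acquires weight $\ge d+1$ after extension; hence $\hat{\mathcal{C}}$ has minimum distance at least $d+1$, and its dimension is still $k$ because the extension is injective. I would also record that $\sigma\ne\mathbf{0}$, for otherwise all codewords of $\mathcal{C}$ would be even and $d$ would be even; thus the new coordinate is genuinely nonzero and no zero column is created.

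The crux, and where $k$ even is indispensable, is to check that $\hat{\mathcal{C}}$ is still LCD. By Lemma \ref{li3} (equivalently \cite{massey1992}) it is enough to prove $\hat G\hat G^T$ invertible over $F_2$. Writing $M=GG^T$, one computes $\hat G\hat G^T=M+\sigma\sigma^T$, a rank-one perturbation of the invertible matrix $M$, and the matrix determinant lemma gives $\det(\hat G\hat G^T)=\det(M)\,(1+\sigma^T M^{-1}\sigma)$. Since $\det(M)=1$, everything reduces to showing $\sigma^T M^{-1}\sigma=0$. The point I would then exploit is that $\sigma$ is nothing but the diagonal of $M$, because $M_{ii}=g_i\cdot g_i=w(g_i)\bmod 2$.

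To finish I would use the $F_2$-identity $v^T N v=v^T\mathrm{diag}(N)$, valid for any symmetric $N$ and any $v$ (the off-diagonal terms cancel in symmetric pairs and $v_i^2=v_i$). Applied with $N=M^{-1}$ and $v=\sigma$ it gives $\sigma^T M^{-1}\sigma=\sum_i M_{ii}(M^{-1})_{ii}$. Consider the entrywise product $A=M\circ M^{-1}$: by symmetry of $M^{-1}$, the $i$-th row of $A$ sums to $(MM^{-1})_{ii}=1$, so the total sum of all entries of $A$ is $k\bmod 2$; on the other hand its off-diagonal entries cancel in symmetric pairs, so that same total sum equals $\sum_i M_{ii}(M^{-1})_{ii}$. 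Therefore $\sigma^T M^{-1}\sigma\equiv k\pmod 2$, which is $0$ exactly because $k$ is even, forcing $\det(\hat G\hat G^T)=\det(M)\ne 0$. I expect this parity computation to be the one genuinely delicate step; once it is secured, $\hat{\mathcal{C}}$ is an LCD $[n+1,k,\ge d+1]$ code and the claimed inequality $d_l(n+1,k)\ge d_l(n,k)+1$ is immediate.
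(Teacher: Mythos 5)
The paper gives no proof of this lemma at all; it is imported verbatim from \cite{Bouyuklieva2021}, so there is no in-paper argument to compare with, and I can only judge your proof on its own terms. It is correct and follows the standard route of that reference: parity-extend an LCD code attaining $d_l(n,k)$, note that oddness of $d$ forces every nonzero extended codeword to have weight at least $d+1$, and verify that LCD-ness survives the extension. The crux --- $\det(M+\sigma\sigma^T)=\det(M)\,(1+\sigma^{T}M^{-1}\sigma)$ with $M=GG^{T}$ and $\sigma=\mathrm{diag}(M)$ the appended parity column, followed by $\sigma^{T}M^{-1}\sigma=\sum_i M_{ii}(M^{-1})_{ii}\equiv \sum_{i,j}M_{ij}(M^{-1})_{ij}=\mathrm{tr}(MM^{-1})=k\equiv 0\pmod 2$, using the symmetry of $M$ and $M^{-1}$ --- is exactly right, and it correctly isolates the one place where the hypothesis that $k$ is even is indispensable; the auxiliary checks (that $m\sigma$ really is the parity of $mG$, that $\sigma\neq\mathbf{0}$, and that the dimension is preserved) are also all in place.
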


\begin{lemma}\cite{Mac60} \label{li2}Let $s\geq 0$, $k\geq 6$, $1\leq m\leq k-1$, then
the  code $\mathcal{C}=$$\mathcal{MD}$$_{s}(k,m)$$=[sN+2^{k}-2^{m},
k,s2^{k-1}+2^{k-1}-2^{m-1}]$ has $h(\mathcal{C})=k$ for $m=0$,
$h(\mathcal{C})=k-1$ for $m=1$ and $h(\mathcal{C})=k-2\geq 4$ for
$m\ge 2$, hence $\mathcal{C}$ is not an LCD code.
\end{lemma}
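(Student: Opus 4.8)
The plan is to reduce the hull of the long code $\mathcal{C}=\mathcal{MD}_{s}(k,m)$ to a computation on the small $m$-dimensional simplex code, exploiting the anti-code identity of Lemma~\ref{li2023lemma}. First I would record a generator matrix in defining-vector form. Since the length is $sN+2^{k}-2^{m}=(s+1)N-(2^{m}-1)$, the code arises from $s+1$ copies of the simplex code $\mathcal{S}_{k}$ by deleting one copy of each of the $2^{m}-1$ columns that lie in a fixed $m$-dimensional coordinate subspace $V$. Hence its defining vector satisfies $l_{i}=s+1$ for $\alpha_{i}\notin V$ and $l_{i}=s$ for $\alpha_{i}\in V$, so $l_{max}=s+1$ and the anti-vector $L^{c}$ equals $1$ exactly on the $2^{m}-1$ columns of $V$ and $0$ elsewhere. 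Thus, up to a coordinate permutation and $k-m$ zero rows, the anti-matrix $G^{c}$ is the generator matrix $\mathbf{S}_{m}$ of the $m$-dimensional simplex code.

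The second step is to invoke Lemma~\ref{li2023lemma}: because $k\ge 3$ we have $GG^{T}=G^{c}(G^{c})^{T}$. The force of this identity is that the $s$ juxtaposed simplex blocks together with the bulk of the MacDonald columns contribute nothing modulo $2$ (as $\mathbf{S}_{k}\mathbf{S}_{k}^{T}=0$ over $F_{2}$ for $k\ge 3$), so the $k\times k$ Gram matrix collapses to $G^{c}(G^{c})^{T}=\mathbf{S}_{m}\mathbf{S}_{m}^{T}$, supported on the $m$ coordinates of $V$. Therefore $h(\mathcal{C})=k-\operatorname{rank}_{F_{2}}(\mathbf{S}_{m}\mathbf{S}_{m}^{T})=(k-m)+h(\mathcal{S}_{m})$, and the whole question is reduced to the hull dimension of the tiny code $\mathcal{S}_{m}$.

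Next I would evaluate $\operatorname{rank}_{F_{2}}(\mathbf{S}_{m}\mathbf{S}_{m}^{T})$ from the constant-weight structure of $\mathcal{S}_{m}$: each row of $\mathbf{S}_{m}$ has weight $2^{m-1}$ and any two distinct rows overlap in $2^{m-2}$ positions. For $m=0$ there is nothing to delete and $h(\mathcal{C})=k$; for $m=1$ the Gram matrix is the $1\times 1$ block $(1)$, of rank $1$, giving $h(\mathcal{C})=k-1$; for $m=2$ it is $\left(\begin{smallmatrix}0&1\\1&0\end{smallmatrix}\right)$, of rank $2$, giving $h(\mathcal{C})=k-2$. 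Since in every case $GG^{T}$ is singular, Lemma~\ref{li3} shows $\mathcal{C}$ is not LCD.

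The step I expect to be the main obstacle, and the one I would check most carefully, is the range $m\ge 3$. There both $2^{m-1}$ and $2^{m-2}$ are even, so $\mathbf{S}_{m}\mathbf{S}_{m}^{T}\equiv 0$ over $F_{2}$; equivalently $\mathcal{S}_{m}$ is self-orthogonal with $h(\mathcal{S}_{m})=m$, and my computation then gives $h(\mathcal{C})=(k-m)+m=k$ rather than $k-2$ (for example $\mathcal{MD}_{0}(4,3)$ is the self-dual $[8,4,4]$ code). This does not affect the non-LCD conclusion, since $h(\mathcal{C})=k>0$, but it suggests the value recorded for $m\ge 3$ should read $k$ rather than $k-2$; pinning down this trichotomy in $m$, with the jump back to $k$ forced by the self-orthogonality of $\mathcal{S}_{m}$, is the delicate part of the argument.
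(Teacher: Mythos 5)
The paper offers no proof of this lemma---it is attributed to \cite{Mac60}, a reference that predates the notions of hull and LCD code---so your argument must be judged on its own terms, and on those terms it is essentially correct; more importantly, it exposes a genuine error in the statement itself. Your reduction is sound: writing $\mathcal{MD}_{s}(k,m)$ as $s+1$ copies of $\mathbf{S}_{k}$ with one copy of the $2^{m}-1$ nonzero columns of an $m$-dimensional subspace deleted, the anti-matrix is, up to an invertible row transformation and $k-m$ zero rows, the matrix $\mathbf{S}_{m}$, and Lemma \ref{li2023lemma} (or a direct computation using $\mathbf{S}_{k}\mathbf{S}_{k}^{T}\equiv 0 \pmod 2$ for $k\geq 3$) gives $\operatorname{rank}(GG^{T})=\operatorname{rank}(\mathbf{S}_{m}\mathbf{S}_{m}^{T})$. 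Since the rows of $\mathbf{S}_{m}$ have weight $2^{m-1}$ and pairwise intersection $2^{m-2}$, that rank is $1$ for $m=1$, $2$ for $m=2$, and $0$ for $m\geq 3$, giving $h(\mathcal{C})=k-1$, $k-2$ and $k$ respectively.

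The concern you flag is exactly right: the lemma's claim $h(\mathcal{C})=k-2$ for all $m\geq 2$ fails for $m\geq 3$, where $\mathcal{MD}_{s}(k,m)$ is in fact self-orthogonal with $h(\mathcal{C})=k$; your example $\mathcal{MD}_{0}(4,3)=[8,4,4]$, the self-dual extended Hamming code, already witnesses this, and the same Gram computation carries over verbatim to $k\geq 6$. The error is harmless for the rest of the paper: Lemma \ref{li2} is invoked only in Theorems 6 and 7, in both cases for the code $[62s+28,5,32s+14]=\mathcal{MD}_{2s}(5,2)$ with $m=2$, where your computation confirms $h=k-2=3$ (note these applications also have $k=5<6$, violating the stated hypothesis, but the Gram-matrix argument needs only $k\geq 3$). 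The conclusion that $\mathcal{C}$ is never LCD survives for every $m\geq 1$, since $GG^{T}$ is singular in all cases. A minor further inconsistency in the statement, which you implicitly repair, is that the hypothesis $1\leq m\leq k-1$ excludes the case $m=0$ for which a value of $h$ is nevertheless asserted.
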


\section{ On the nonexistence of $[n,6,d_{n}]$  linear  LCD codes}
\label{sec3}

In this section, let  $s\geq 1$ be an integer and we consider the
nonexistence of LCD codes $[n,6,d_{n}]$. For clarity,  some known
results about the Griesmer bounds $d_g$, the minimum distances $d_a$
of all optimal linear codes and  the minimum distances $d_l$ of
optimal LCD codes are listed in Table 1.

\begin{table}[h!]\label{t1}
    \begin{center}
        \caption{Known results of $[n, 6, d_{t}]$  codes in Refs.
        \cite{Grassl,Galvez2018,Harada2019,Fu2019,Araya2020,Bouyuklieva2021,Wang2023} for $6\le n\le 50$}
        $\begin{tabular} {lllllllllllllllllllllllllllll}
            \hline \hline
            $n$ &  6& 7& 8& 9& 10& 11& 12& 13&14& 15& 16& 17&  18& 19& 20\\
            \hline
            $d_g$& 1& 2& 2& 3& 4&   4&  4&  5& 6& 6 &7&8& 8&  8& 8\\
            $d_a$& 1& 2& 2& 2& 3&   4&  4&  4& 5& 6 &6&7& 8&  8& 8\\
            $d_l$&1& 2& 2& 2& 3&   4&  4&  4& 5& 6&6&6& 7&  8& 8 \\
            \hline \hline
            $n$& 21& 22& 23& 24& 25& 26& 27& 28&29& 30& 31& 32&  33& 34& 35\\
            \hline
            $d_g$&9&    10&  10&11& 12& 12& 12& 13&  14&14& 15&16&16& 16& 16 \\
            $d_a$&8&    9&  10&10& 11& 12& 12& 12&  13&14& 15&16&16& 16& 16 \\
            $d_l$&8&    9&  10&10& 10& 11& 12& 12&  12&13 & 14&14&14& 15& 16\\
            \hline \hline
            $n$& 36& 37& 38& 39&40& 41&42& 43& 44& 45& 46& 47&  48& 49& 50\\
            \hline
            $d_g$& 16&17&18& 18& 19& 20& 20& 20& 21&  22& 22& 23& 24& 24& 24\\
            $d_a$&16&17& 18& 18& 18& 19& 20&20& 21&  22& 22& 23& 24& 24& 24\\
            $d_l$&16&16& 17& 18& 18& 19& 20&20& 20&  20& 21& 22& 22& 23& 24\\
            \hline \hline
        \end{tabular}$

$d_{t }$ is used to denote $d_g$, $d_a$ or $d_l$ in different
lines.\qquad \qquad \qquad \qquad \qquad \qquad \qquad
    \end{center}
\end{table}

From Table 1 and the code tables given  by Grassl in \cite{Grassl},
one can easily know the following items:

(1) For $n=6,7,8,11,12,15,18,19,20,23,26,27$, $30\leq n \leq
63(n\neq 40,41)$, the optimal linear $[n,6,d_a]$ codes saturate the
Griesmer bound. If $n\geq 64(s\geq 1)$, one can  infer that all
optimal $[63s+t,6,32s+d_{a}(t)]$ codes also saturate the Griesmer
bound.

(2) The greatest minimum distance $d_l$ of optimal LCD codes can not
meet the corresponding $d_a$ for
$n=18,26,30,31,32,33,34,37,38,44-49$.

(3) For $n=9,10,13,14,16,17,21,22,24,25,28,29, 40,41$,  the optimal
linear codes $[n,6,d_a]$ can not saturate the Griesmer bound.

  Thus, one should only pay special attention to LCD codes of lengths
$n=63s+t \geq 51$ for
$t=0,1,2,3,4,5,9,10,13,14,16,17,18,21,22,24,25,26,28-34,
37,38,40,41,44-49,51-62$. In the sequel, we will investigate LCD
codes with length $n=63s+t \geq 51$ for given $s,t$.

\begin{theorem} There is no  $[63s,6,32s]$, $[63s,6,32s-1]$, $[63s+1,6,32s]$,
$[63s+1,6,32s-1]$ or $[63s+2,6,32s]$ LCD code.

\end{theorem}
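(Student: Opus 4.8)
The plan is to show that each of the five codes, if it existed, would violate a lower bound on length coming from the anti-code / defining-vector machinery of Lemma~\ref{li2023lemma}, or would be forced to have nontrivial hull by Lemma~\ref{li2}. Since $k=6$, we have $N=2^{6}-1=63$ and each codeword-weight of a code with generator matrix $G=(l_{1}\alpha_{1},\dots,l_{63}\alpha_{63})$ equals $w_{i}=(PL^{T})_{i}$; by Lemma~\ref{li2023lemma} the minimum distance is $d=a\cdot 2^{5}-\delta=32a-\delta$, where $a=l_{\max}$ and $\delta$ is the maximum weight of the associated generalized anti-code $\mathcal{C}^{a}=(m,2^{6},\{\delta\})$ with $m=63a-n$. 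The strategy is to fix the target distance $d$ in each case, translate it into tight constraints on the defining vector through the inequality $\lfloor\frac{1}{2^{k-1}}(d+\sigma)\rfloor\le l_{i}\le\lceil\frac{1}{2^{k-1}}(d-\sigma)\rceil$ with $\sigma=2^{5}n-63d$, and argue that every admissible $L$ forces $G^{c}(G^{c})^{T}$ (equivalently $GG^{T}$) to be singular, so by Lemma~\ref{li3} the code is not LCD.

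First I would handle the two length-$63s$ cases. For $n=63s$ the value $s=l_{\min}$ is natural, and I would compute $\sigma$ for $d=32s$ and for $d=32s-1$; in both cases the inequalities should pin the $l_{i}$ into a very small window around $s$, so that the anti-vector $L^{c}=(a-l_{1},\dots,a-l_{63})$ has small total mass $m=63a-n$. The key point is that when the anti-code is small and concentrated, its rank over $F_{2}$ is deficient, i.e.\ $G^{c}(G^{c})^{T}$ is not invertible. Concretely, I expect the admissible defining vectors to be (up to the action permuting coordinates within the simplex structure) exactly those appearing in the mass-distribution families of Lemma~\ref{li2}: a code of the form $\mathcal{MD}_{s}(6,m)$ realizes the borderline distance and has $h(\mathcal{C})\ge k-2=4$, hence is not LCD. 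So the bulk of the argument is to show that the distance constraint leaves no defining vector \emph{outside} these self-orthogonal-heavy families.

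Next I would treat $n=63s+1$ and $n=63s+2$ analogously, recomputing $\sigma$ (now $\sigma=2^{5}(63s+1)-63d$, etc.) and re-deriving the window for the $l_{i}$. Here the extra one or two columns relative to the $63s$-multiple should mean the anti-code differs from the previous case by only one or two unit columns; I would check that this small perturbation cannot break the rank deficiency, again by exhibiting that each surviving $L$ (or $L^{c}$) is forced into a pattern with an even, dependent column-weight structure, so that $G^{c}(G^{c})^{T}$ stays singular. Throughout, I would lean on Examples~1 and~2 as templates: they show precisely how reading off the \emph{type} of $L^{c}$ (the multiset of anti-vector entries) lets one certify $\operatorname{rank}(G^{c}(G^{c})^{T})<k$ directly, which is the mechanism Lemma~\ref{li3} converts into non-LCD-ness.

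The hard part will be the exhaustive step: showing that \emph{every} defining vector meeting the relevant distance constraint lands in a non-LCD pattern. The distance inequality narrows the $l_{i}$ to two or three consecutive integers, but there are still many ways to distribute those values across the $63$ simplex columns, and one must rule out all of them rather than just the symmetric representative. I expect this to require a careful case split on the \emph{type} $]](l_{j_{1}})_{m_{1}}\mid\cdots\mid(l_{j_{t}})_{m_{t}}]]$ of the anti-vector, using the recursive block structure of $P_{k}$ and the identity $P^{-1}_{k}=\frac{1}{2^{k-1}}(2P_{k}-J_{k})$ to control which $\Lambda$ (equivalently which weight-distributions) are even achievable; the residual cases that are \emph{not} immediately self-orthogonal will have to be knocked out by the reduced-code test of Lemma~\ref{li}, i.e.\ by peeling off an all-one row and exhibiting a reduced $[n-m,5,\ge d]$ code with hull dimension $\ge 2$.
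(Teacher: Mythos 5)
Your high-level toolkit (defining vectors, the window $\lceil\frac{d-\sigma}{2^{k-1}}\rceil\le l_i\le\lfloor\frac{d+\sigma}{2^{k-1}}\rfloor$, anti-codes, reduced codes) is the right one, but as written the proposal has a genuine gap: the ``exhaustive step'' you flag as the hard part is never carried out, and for the odd-distance cases it would not go through in the form you describe. For $[63s,6,32s-1]$ one has $\sigma=63$, so the $l_i$ are only pinned to a window of four consecutive values across $63$ coordinates; there is no small list of types to enumerate, and your heuristic that the anti-code is ``small and concentrated'' fails here (and already for $[63s+1,6,32s]$, where $m=63(s+1)-(63s+1)=62$). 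The paper avoids this entirely by invoking Lemma~\ref{Bou}: since $k=6$ is even, an LCD code with odd minimum distance $32s-1$ at length $63s$ (resp.\ $63s+1$) would extend to an LCD $[63s+1,6,32s]$ (resp.\ $[63s+2,6,32s]$) code, so the two odd-distance cases follow for free from the even ones. You never use this lemma, and without it your plan commits you to a defining-vector classification that is infeasible as described.

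For the even-distance cases the paper's actual mechanism is also sharper than your fall-back version of the reduced-code test: $[63s,6,32s]$ is just $s$ copies of $\mathbf{S}_6$, hence SO; for $[63s+1,6,32s]$ the constraint forces $l_{\max}=s+1$, and peeling off the corresponding row leaves a $[62s,5,\ge 32s]$ code which the Griesmer bound forces to be exactly $[31\cdot 2s,5,16\cdot 2s]$, i.e.\ a juxtaposition of simplex codes $\mathbf{S}_5$, hence SO with hull $5$, so Lemma~\ref{li} gives $h(\mathcal{C})\ge 4$; for $[63s+2,6,32s]$ the same peeling is iterated once more down to a $[60s,4,32s]=[15\cdot 4s,4,8\cdot 4s]$ SO code. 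The point you are missing is that the Griesmer bound pins the reduced code's parameters so tightly that its hull is known without any enumeration of types; the reduced-code argument is the main engine, not a mop-up for ``residual cases.'' If you restructure around (i) Lemma~\ref{Bou} for the two odd-distance parameters and (ii) the Griesmer-forced simplex-juxtaposition structure of the iterated reduced codes, your argument becomes the paper's proof; as it stands, the central claim that ``every admissible $L$ forces $G^{c}(G^{c})^{T}$ singular'' is asserted but not established.
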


\begin{proof} Obviously, a $[63s,6,32s]$ code is the juxtaposition of $s$ simplex codes with dimension 6 and then naturally SO.

 If $\mathcal {C}$ is a $[63s+1,6,32s]$ linear code, then
$\sigma=32$ and its defining vector $L$ satisfies $l_{max}=s+1$. It
follows that a $[63s+1,6,32s]$ code has a reduced code
$[62s,5,32s]=[31\times 2s,5,16\times 2s]$,  which is SO. Thus, one
can deduce a code with parameters $[63s+1,6,32s]$ has $h\geq 4$ and
a $[63s+1,6,32s]$ code is not LCD  by Lemma \ref{li}. Further, there
exists no $[63s,6,32s-1]$ LCD code considering  that a
$[63s,6,32s-1]$ code can be extended to  a $[63s+1,6,32s]$ LCD code
by Lemma \ref{Bou}.

 A $[63s+2,6,32s]$ code has $\sigma=64$ and then its defining vector  $L$ satisfies $s-1 \le l_{i}\le s+2$.

If $l_{max}= s+2$, then $\mathcal {C}$ has a reduced code
$[62s,5,32s]=[31\times 2s,5,16\times 2s]$, which is SO.

If $l_{max}= s+1$, it has a reduced code $[62s+1,5,32s]=[31\times
2s+1,5,16\times 2s]$. And this code has a reduced $[60s=15\times
4s,4,16\times 2s]$ SO code, then a $[63s+2,6,32s]$ code has $h\geq
2$ and not LCD by Lemma \ref{li}. Similarly, we can derive that a
$[63s+1,6,32s-1]$ code is not LCD  by Lemma \ref{Bou}.

\end{proof}

\begin{theorem} There is no   $[63s+10,6,32s+4]$ or $[63s+9,6,32s+3]$ LCD code.
\end{theorem}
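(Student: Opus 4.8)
The plan is to mimic the proof of Theorem 1: assume such a code $\mathcal{C}$ exists, read off its defining vector from $\sigma=2^{k-1}n-d(2^{k}-1)$, bound the multiplicities $l_i$, and then peel off reduced codes of decreasing dimension. At each step the Griesmer bound forces the value of $l_{max}$ (a too-large $l_{max}$ leaves a reduced code that is too short for its required distance), so the reduction chain is forced and terminates in a dimension-$4$ Griesmer code whose hull I can compute; two applications of Lemma \ref{li} then show $\mathcal{C}$ is not LCD.

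For $\mathcal{C}=[63s+10,6,32s+4]$ one has $\sigma=32(63s+10)-63(32s+4)=68$, so $s-2\le l_i\le s+2$. If $l_{max}=s+2$ the reduced code would be $[62s+8,5,\ge 32s+4]$, impossible since the Griesmer bound needs length $\ge 62s+9$; hence $l_{max}=s+1$ and the reduced code is the Griesmer code $\mathcal{C}'=[62s+9,5,32s+4]$. Repeating ($\sigma'=20$, $2s-1\le l'_i\le 2s+1$) forces $l'_{max}=2s+1$ and $\mathcal{C}''=[60s+8,4,32s+4]$, again meeting Griesmer. Here $\sigma''=4$, so $l''_i\in\{4s,4s+1\}$ with exactly eight coordinates equal to $4s+1$; thus $\mathcal{C}''$ consists of $4s$ copies of $\mathcal{S}_4$ together with eight distinct extra columns forming a matrix $G_E$. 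Since every codeword weight equals $32s$ plus the weight of the corresponding word of the $[8,\le 4]$ code generated by $G_E$, the distance $32s+4$ forces $G_E$ to generate the unique $[8,4,4]$ (extended Hamming) code, which is self-dual. As $4s$ is even, $G''(G'')^{T}\equiv G_E G_E^{T}\equiv 0\pmod 2$, so $\mathcal{C}''$ is self-orthogonal with $h(\mathcal{C}'')=4$. Lemma \ref{li} then gives $h(\mathcal{C}')\ge 3$, hence $h(\mathcal{C})\ge 2$, so $\mathcal{C}$ is not LCD.

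For $\mathcal{C}=[63s+9,6,32s+3]$ one finds $\sigma=99$ and $s-3\le l_i\le s+3$. Both $l_{max}=s+3$ and $l_{max}=s+2$ are ruled out by the Griesmer bound on the reduced code (a dimension-$5$ code of distance $32s+3$ needs length $\ge 62s+8$), leaving $l_{max}=s+1$ and $\mathcal{C}'=[62s+8,5,32s+3]$. In the same way $l'_{max}=2s+1$ and $\mathcal{C}''=[60s+7,4,32s+3]$ with $\sigma''=11$, so now $l''_i\in\{4s-1,4s,4s+1\}$ and the deviations $\epsilon_i=l''_i-4s$ satisfy $\sum_i\epsilon_i=7$. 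Because $4s$ is even, the mod-$2$ Gram matrix reduces to $G''(G'')^{T}\equiv\sum_{i:\,\epsilon_i=0}\alpha_i\alpha_i^{T}\pmod 2$, so I must show this has $F_2$-rank at most $1$, i.e. $h(\mathcal{C}'')\ge 3$; then Lemma \ref{li} again yields $h(\mathcal{C}')\ge 2$ and hence $\mathcal{C}$ not LCD.

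The main obstacle is precisely this hull computation for the distance-$(32s+3)$ chain, where, unlike the previous case, a column multiplicity may drop below the base value $4s$. I would control it through the weight spectrum: writing $y_v=w(c_v)-32s=\sum_{\langle v,\alpha_i\rangle=1}\epsilon_i\ge 3$ for every nonzero functional $v$ and using the transform value $\widehat{\epsilon}(v)=7-2y_v\le 1$ together with Parseval $\sum_v\widehat{\epsilon}(v)^{2}=2^{4}\,\#\{i:\epsilon_i\neq 0\}$, one checks that the number of coordinates with $\epsilon_i=-1$ is at most one. This pins down the admissible defining-vector types (finitely many, independent of $s$), and for each the even-multiplicity set $\{i:\epsilon_i=0\}$ is a configuration—one short of a punctured hyperplane, or the complement of a Hamming $[7,4,3]$ frame—whose outer-product sum has rank $1$; the underlying reason is that the hull of the $[7,4,3]$ Hamming code is $3$-dimensional, its dual simplex $[7,3,4]$ being self-orthogonal and contained in it. Verifying that every admissible type yields rank $\le 1$ is the delicate, case-checking part of the argument, while everything above it is forced by the Griesmer bound.
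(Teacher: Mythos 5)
Your treatment of the $[63s+10,6,32s+4]$ case is correct and follows the paper's route exactly: $\sigma=68$ forces $l_{max}\in\{s+1,s+2\}$, Griesmer kills $l_{max}=s+2$, and the chain $[62s+9,5,32s+4]\rightarrow[60s+8,4,32s+4]$ ends in a self-orthogonal code, whence $h(\mathcal{C})\ge 2$ by two applications of Lemma \ref{li}. In fact you supply a detail the paper omits, namely \emph{why} the $[60s+8,4,32s+4]$ code is self-orthogonal (the eight excess columns must generate the self-dual $[8,4,4]$ code), and that argument is sound.

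The second half, however, has a genuine gap. The paper disposes of $[63s+9,6,32s+3]$ in one line via Lemma \ref{Bou}: since $k=6$ is even and $32s+3$ is odd (and $32s+4$ is the Griesmer-maximal distance at length $63s+9$), an LCD $[63s+9,6,32s+3]$ code would force an LCD $[63s+10,6,32s+4]$ code to exist, contradicting the case you just settled. You miss this and instead launch a direct defining-vector analysis that you yourself leave unfinished: the claim that every admissible type of the $[60s+7,4,32s+3]$ code has Gram rank at most $1$ is exactly the ``delicate, case-checking part'' you defer, so the proof is not complete as written. Moreover the identity you base it on is wrong as stated: since $l''_i=4s+\epsilon_i$ with $4s$ even, the mod-$2$ Gram matrix is $\sum_{i:\,\epsilon_i\neq 0}\alpha_i\alpha_i^{T}$ (the columns of \emph{odd} multiplicity), not $\sum_{i:\,\epsilon_i=0}\alpha_i\alpha_i^{T}$; the set you would have to analyse is the $(7+2n_-)$-element support of $\epsilon$, not its complement. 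None of this is needed: invoke Lemma \ref{Bou} and the second statement follows immediately from the first.
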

\begin{proof}
A $[63s+10,6,32s+4]$ code has $\sigma=68$ and then its defining
vector  $L$ satisfies $s-2 \le l_{i}\le s+2$.

If $l_{max}= s+2$, then $\mathcal {C}$ has a reduced code
$[62s+8,5,32s+4]$ violating the Griesmer bound, a contradiction.

When $l_{max}= s+1$, a $[63s+10,6,32s+4]$ code has a reduced code
$[62s+9,5,32s+4]$ code, and $[62s+9,5,32s+4]$ has a reduced
$[60s+8,4,32s+4]$ SO code, thus we can deduce a $[63s+10,6,32s+4]$
code is not LCD. It immediately follows that $[63s+9,6,32s+3]$ code
is not LCD.

\end{proof}

\begin{theorem} There is no   $[63s+14,6,32s+6]$ or $[63s+13,6,32s+5]$ LCD code.
\end{theorem}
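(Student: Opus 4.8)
The plan is to settle the even-distance member first and deduce the odd one. A $[63s+13,6,32s+6]$ code is impossible because its Griesmer bound equals $63s+14>63s+13$; hence $d_{a}(63s+13,6)\le 32s+5$, and any LCD $[63s+13,6,32s+5]$ code realises $d_{l}(63s+13,6)=32s+5$, an odd number. Since $k=6$ is even, Lemma \ref{Bou} then extends it to an LCD $[63s+14,6,32s+6]$ code. So it suffices to prove that no LCD $[63s+14,6,32s+6]$ code exists, after which the odd case follows at once.

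For $\mathcal{C}=[63s+14,6,32s+6]$ I pass to its defining vector. Here $\sigma=2^{5}(63s+14)-63(32s+6)=70$, so $s-2\le l_{i}\le s+2$, and I split on $l_{max}$. If $l_{max}=s+2$ the reduced code is $[62s+12,5,\ge 32s+6]$, whose Griesmer bound $62s+13$ exceeds its length, so this case is empty. The content lies entirely in $l_{max}=s+1$, where the reduced code is $\mathcal{C}_{1}=[62s+13,5,32s+6]$, meeting the Griesmer bound. Reducing once more gives $\mathcal{C}_{2}=[60s+12,4,32s+6]$, again Griesmer-optimal; its anti-code consists of three columns of multiplicity $4s$ and twelve of multiplicity $4s+1$, and the distance condition forces the three light columns to be collinear. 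Thus $\mathcal{C}_{2}\cong\mathcal{MD}_{4s}(4,2)$, and Lemma \ref{li2} gives $h(\mathcal{C}_{2})=2$; Lemma \ref{li} then yields $h(\mathcal{C}_{1})\ge 1$ and that $\mathcal{C}_{1}$ is not LCD.

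The decisive obstacle is that this only gives $h(\mathcal{C}_{1})\ge 1$, whereas Lemma \ref{li} needs $h(\mathcal{C}_{1})\ge 2$ to conclude that $\mathcal{C}$ is not LCD. For the neighbouring length $63s+10$ the analogous degree-$4$ reduction lands on a \emph{self-orthogonal} code ($h=4$) and the chain closes in one stroke; here the same reduction reaches only hull $2$, and lengthening the chain does not help, since the next reduction $[56s+11,3,32s+6]$ again has hull $1$. To bridge the gap I would argue at the level of $\mathcal{C}_{1}$ itself: the Griesmer-optimal codes $[62s+13,5,32s+6]$ form a one-parameter family indexed by the number $c\in\{0,\dots,9\}$ of columns of multiplicity $2s-1$, and I must show that every member admissible as a reduced code of a dimension-$6$, distance-$(32s+6)$ code has $h\ge 2$. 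Embedding $\mathcal{C}_{1}$ into such a $\mathcal{C}$ requires a vector $u$ with $\mathrm{dist}(u,\mathcal{C}_{1})\ge 31s+5$; the low-hull profiles—typified by the juxtaposition of $2s$ simplex codes $\mathcal{S}_{5}$ with an LCD $[13,5,6]$ block, for which $h(\mathcal{C}_{1})=h([13,5,6])=0$—have covering radius of order $30s$ and so fail this bound for large $s$, whereas the surviving profiles carry enough even-multiplicity structure to force, via $h(\mathcal{C}_{1})=5-\mathrm{rank}_{F_{2}}(G_{1}^{c}(G_{1}^{c})^{T})$ (Lemma \ref{li2023lemma}), that this rank is at most $3$. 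Turning the covering-radius heuristic into the clean dichotomy ``admissible $\Rightarrow h(\mathcal{C}_{1})\ge 2$''—or, equivalently, proving directly that $\mathrm{rank}_{F_{2}}\big(\sum_{l_{i}\ \mathrm{odd}}\alpha_{i}\alpha_{i}^{T}\big)<6$ for every admissible defining vector of $\mathcal{C}$—is the hard part of the argument.
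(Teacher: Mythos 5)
Your opening and the first half of your case analysis match the paper: the Griesmer computation showing no $[63s+13,6,32s+6]$ code exists, the use of Lemma \ref{Bou} to tie the odd-distance case to the even one, the computation $\sigma=70$ with $s-2\le l_i\le s+2$, and the elimination of $l_{max}=s+2$ by a Griesmer violation of the reduced $[62s+12,5,\ge 32s+6]$ code are all exactly the paper's steps. The problem is the case $l_{max}=s+1$, which carries all the content, and there your argument is incomplete --- as you candidly admit. Your second reduction $\mathcal{C}\to\mathcal{C}_1=[62s+13,5,32s+6]\to\mathcal{C}_2=[60s+12,4,32s+6]\cong\mathcal{MD}_{4s}(4,2)$ gives $h(\mathcal{C}_2)=2$, and Lemma \ref{li} then yields only $h(\mathcal{C}_1)\ge 1$, which is one short of the hypothesis $h(\mathcal{C}_1)\ge 2$ needed to conclude that $\mathcal{C}$ itself is not LCD. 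The covering-radius discussion in your final paragraph is a heuristic sketch, not a proof, so the statement is not established.

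The paper bridges this gap by a different (and much shorter) means: it does not reduce a second time at all, but cites its companion dimension-five result (Ref.\ \cite{liuar}) for the fact that \emph{every} $[62s+13,5,32s+6]$ code has hull dimension at least $3$; Lemma \ref{li} then gives $h(\mathcal{C})\ge 2$ immediately. So the missing ingredient is precisely the classification-type bound $h([62s+13,5,32s+6])\ge 3$, which cannot be recovered from the hull of the dimension-four reduction alone (that route saturates at $h(\mathcal{C}_1)\ge 1$, as you found). If you want a self-contained proof you would have to establish that dimension-five hull bound directly, e.g.\ by enumerating the admissible defining vectors of $[62s+13,5,32s+6]$ codes and computing $\mathrm{rank}(G^{c}(G^{c})^{T})$ for each via Lemma \ref{li2023lemma}; the paper simply imports it as a black box.
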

\begin{proof}
A $[63s+14,6,32s+6]$ code has $\sigma=70$ and then its defining
vector  $L$ satisfies $s-2 \le l_{i}\le s+2$.

If $l_{max}= s+2$, then $\mathcal {C}$ has a reduced code
$[62s+12,5,32s+6]$ violating the Griesmer bound, a contradiction.

When $l_{max}= s+1$, a $[63s+14,6,32s+6]$ code has a reduced code
$\mathcal{D}=[62s+13,5,32s+6]$ code, which has $h(\mathcal{D})\geq
3$ according to Ref. \cite{liuar}, thus $h(\mathcal {C})$$\geq 2$.
It then follows that $[63s+13,6,32s+5]$ code is not LCD.

\end{proof}

\begin{theorem} There is no  $[63s+17,6,32s+8]$,  $[63s+16,6,32s+7]$, $[63s+17,6,32s+7]$ or $[63s+18,6,32s+8]$
 LCD code.

\end{theorem}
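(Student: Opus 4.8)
The plan is to follow the same defining-vector and reduced-code strategy used in the preceding theorems, treating each of the four claimed parameter sets in turn and reducing everything to the two primary cases $[63s+17,6,32s+8]$ and $[63s+18,6,32s+8]$, from which the near-optimal cases $[63s+16,6,32s+7]$ and $[63s+17,6,32s+7]$ will follow by Lemma \ref{Bou} (since $k=6$ is even and the relevant distances are odd). First I would handle $[63s+17,6,32s+8]$. For such a code one computes $\sigma=2^{5}(63s+17)-(32s+8)\cdot 63=32\cdot 17-8\cdot 63=544-504=40$, so $\sigma=40$, and then the bound $\lfloor\frac{1}{2^{5}}(d+\sigma)\rfloor\le l_i\le\lceil\frac{1}{2^{5}}(d-\sigma)\rceil$ (as stated after the $P_k$, $Q_k$ discussion) pins $l_{max}$ to one of a small number of values; I expect $l_{max}\in\{s+1,s+2\}$ after discarding the too-large options that would force a reduced code violating the Griesmer bound.

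For each admissible value of $l_{max}$ I would peel off the all-one block guaranteed by the definition of a reduced code, producing a chain $\mathcal{C}\supset\mathcal{C}_1\supset\mathcal{C}_2\supset\cdots$ of codes of dimensions $6,5,4,\dots$, and track the length and distance at each stage. The goal in each branch is one of two outcomes: either the reduced $[n',k',d']$ code violates the Griesmer bound (giving an immediate contradiction, so that value of $l_{max}$ is impossible), or the chain terminates in a recognizably self-orthogonal code—most likely a juxtaposition of copies of a lower-dimensional simplex code, of the form $[15\times4t,4,16\times2t]$ or $[31\times2t,5,16\times2t]$ as in the earlier proofs. Once an SO reduced code with $h\ge 2$ appears, Lemma \ref{li} yields $h(\mathcal{C})\ge 1$ and hence $\mathcal{C}$ is not LCD. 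Where the chain instead lands on a code whose hull dimension is known only from an external reference (as in the $[62s+13,5,32s+6]$ step of Theorem 3, where \cite{liuar} was invoked), I would cite the corresponding bound for the relevant $[n',5]$ or $[n',4]$ code to get $h\ge 3$ one level down and thus $h\ge 2$ for $\mathcal{C}$.

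Next I would treat $[63s+18,6,32s+8]$, whose $\sigma=32\cdot18-8\cdot63=576-504=72$, giving a wider window $s-2\le l_i\le s+2$ and therefore more subcases for $l_{max}$; this is where the argument becomes most delicate, since larger $\sigma$ means the defining vector is less tightly constrained and more branches survive the Griesmer screening. After disposing of both primary distances, the two near-optimal statements drop out: a hypothetical $[63s+16,6,32s+7]$ LCD code would extend to a $[63s+17,6,32s+8]$ LCD code by Lemma \ref{Bou}, and a hypothetical $[63s+17,6,32s+7]$ LCD code would extend to a $[63s+18,6,32s+8]$ LCD code, both contradicting the primary cases just established.

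The main obstacle I anticipate is the case analysis for $[63s+18,6,32s+8]$ with $l_{max}=s+1$: here the first reduced code has length $62s+17$ and dimension $5$, and it is not immediately self-orthogonal, so one must descend a further level and either exhibit an explicit SO juxtaposition code or locate a sharp enough hull bound in the literature to propagate back up via Lemma \ref{li}. Keeping careful track of exactly how many all-one columns are stripped at each reduction (and hence the precise length of each reduced code) is the bookkeeping that must be done correctly, since an off-by-one in the length can turn a Griesmer-violating code into a valid one and break the argument.
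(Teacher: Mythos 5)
Your overall strategy is the paper's: compute $\sigma$, bound $l_{\max}$ from the defining vector, descend through reduced codes until an SO code or a Griesmer violation appears, invoke Lemma \ref{li}, and then recover the two odd-distance cases $[63s+16,6,32s+7]$ and $[63s+17,6,32s+7]$ from the two even-distance cases via Lemma \ref{Bou}. That reduction structure is exactly right, and you correctly anticipate that the $[63s+18,6,32s+8]$ branch with $l_{\max}=s+1$ is the place where an external hull bound for a $[62s+17,5,32s+8]$ code (from \cite{liuar}) has to be imported --- which is precisely what the paper does.

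The gap is that the decisive step of the first primary case is left as a hope rather than established. For $[63s+17,6,32s+8]$ with $\sigma=40$ the bound gives $l_i\le\lfloor(32s+8+40)/32\rfloor=s+1$, so $l_{\max}=s+1$ is forced (there is no $l_{\max}=s+2$ subcase to discard), and the single reduced code is $[62s+16,5,32s+8]$. The entire argument rests on recognizing that this code is self-orthogonal, and it is \emph{not} of either shape you predict: it is not a juxtaposition $[31\times 2t,5,16\times 2t]$ of simplex codes nor a $[15\times 4t,4,16\times 2t]$ code, but a MacDonald-type code $[31(2s)+2^{5}-2^{4},\,5,\,16(2s)+2^{4}-2^{3}]$ in the family of Lemma \ref{li2}. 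Since your plan's success criterion is ``the chain terminates in a recognizably self-orthogonal code'' of a form that does not actually occur here, the branch as you describe it would not close: you would reach $[62s+16,5,32s+8]$, fail to match it to your expected templates, and have no stated fallback other than descending further (which is unnecessary and not what the paper does). Supplying the identification of $[62s+16,5,32s+8]$ as an SO MacDonald code (or otherwise computing its hull) is the missing ingredient; once that is in place, the rest of your outline assembles into the paper's proof.
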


\begin{proof}

A $[63s+17,6,32s+8]$ code has $\sigma=40$ and then its defining
vector  $L$ satisfies $l_{max}=s+1$. Then it has a reduced
$[62s+16,5,32s+8]$ SO code and one can deduce a $[63s+17,6,32s+8]$
code is not LCD by Lemma \ref{li}. It immediately follows that a
$[63s+16,6,32s+7]$ code is not LCD.

Suppose that there is a $[63s+17,6,32s+7]$ LCD code, then there
exists a $[63s+18,6,32s+8]$ LCD  code by Lemma \ref{Bou}. However, a
$[63s+18,6,32s+8]$ code has a reduced  $[63s+17,5,32s+8]$ code with
$h\geq 3$ by Ref. \cite{liuar}. It naturally follows that a
$[63s+18,6,32s+8]$ code doesn't exist and then there is  no
$[63s+17,6,32s+7]$  LCD code.
\end{proof}

\begin{theorem} There is no
$[63s+25,6,32s+12]$ or $[63s+24,6,32s+11]$
 LCD code.
\end{theorem}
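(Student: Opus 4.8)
The plan is to establish the nonexistence of the $[63s+25,6,32s+12]$ LCD code directly by the defining-vector/reduced-code method, and then to obtain the $[63s+24,6,32s+11]$ case from Lemma \ref{Bou}. For a putative $[63s+25,6,32s+12]$ code $\mathcal{C}$ one first computes $\sigma=2^{5}(63s+25)-(32s+12)(2^{6}-1)=44$, so the displayed bound on the coordinates of the defining vector gives $s-1\le l_{i}\le s+1$. Since $n=63s+25>63s$, the $l_{i}$ cannot all be at most $s$, hence $l_{max}=s+1$ is the only possibility and no case split on $l_{max}$ is required. Reducing once then produces a reduced code $\mathcal{D}=[62s+24,5,\ge 32s+12]$; because the Griesmer bound for an $[n,5,32s+12]$ code is exactly $62s+24$ (and an increase of the distance to $32s+13$ would force length $\ge 62s+27$), we must have $\mathcal{D}=[62s+24,5,32s+12]$.

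The crux is to show $h(\mathcal{D})\ge 2$, after which Lemma \ref{li} gives $h(\mathcal{C})\ge 1$, so $\mathcal{C}$ is not LCD. To see the hull bound I would pass to the generalized anti-code of $\mathcal{D}$. Writing $a'=l'_{max}=2s+1$, a count of the multiplicities shows the defining vector of $\mathcal{D}$ has exactly $24$ entries equal to $2s+1$ and $7$ entries equal to $2s$, so the anti-vector has entries in $\{0,1\}$ with exactly seven $1$'s; thus $G^{c}$ consists of seven distinct nonzero columns of $\mathbf{S}_{5}$, and $\delta'=a'2^{4}-(32s+12)=4$. The key linear-algebra claim is that seven distinct nonzero vectors of $F_{2}^{5}$ whose anti-code attains maximal weight $\delta'=4$ must be precisely the seven nonzero vectors of a $3$-dimensional subspace, i.e. the anti-code is equivalent to $\mathcal{S}_{3}$. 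Granting this, $\mathrm{rank}(G^{c})=3$, so $\mathrm{rank}(G^{c}(G^{c})^{T})\le 3$, and by Lemma \ref{li2023lemma} one gets $h(\mathcal{D})=5-\mathrm{rank}(G^{c}(G^{c})^{T})\ge 2$. Alternatively, and in line with the reductions used in the previous theorems, this hull bound for the family $[62s+24,5,32s+12]$ can be quoted directly from \cite{liuar}.

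For the second code, observe that $32s+11$ is odd while $k=6$ is even: if a $[63s+24,6,32s+11]$ LCD code existed, then Lemma \ref{Bou} would produce a $[63s+25,6,32s+12]$ LCD code, contradicting the first part, so no $[63s+24,6,32s+11]$ LCD code exists. I expect the main obstacle to be exactly the linear-algebra claim that $\delta'=4$ confines the seven anti-columns to a $3$-dimensional subspace. A counting argument shows the average anti-weight is $112/31$, whence $\delta'\ge 4$ always; but ruling out configurations that still achieve $\delta'=4$ while spanning a $4$- or $5$-dimensional subspace (these would only yield $h(\mathcal{D})\ge 1$, which is too weak for Lemma \ref{li}) requires either a careful classification of the admissible defining vectors of $\mathcal{D}$ or a direct evaluation of $\mathrm{rank}(G^{c}(G^{c})^{T})$ for the $6$-dimensional anti-code. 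This is precisely the point at which invoking the results of \cite{liuar} is the cleanest route.
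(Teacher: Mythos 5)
Your overall route coincides with the paper's: compute $\sigma=44$, force $l_{max}=s+1$ (your sharper bound $l_i\le s+1$ neatly removes the case $l_{max}=s+2$, which the paper instead eliminates by a Griesmer violation for the reduced $[62s+23,5,32s+12]$ code), pass to the reduced code $\mathcal{D}=[62s+24,5,32s+12]$, argue $h(\mathcal{D})\ge 2$, and finish with Lemma \ref{li} and Lemma \ref{Bou}. The multiplicity count ($24$ entries equal to $2s+1$, seven equal to $2s$) and the value $\delta'=4$ are correct. The paper, for its part, simply asserts that $\mathcal{D}$ is self-orthogonal and gives no justification.

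The genuine gap is your ``key linear-algebra claim,'' and it is not merely unproved --- it is false. Take the seven anti-columns to be the six weight-$2$ vectors together with the all-ones vector of a $4$-dimensional coordinate subspace of $F_{2}^{5}$: these are seven distinct nonzero vectors spanning a $4$-dimensional space, yet every nonzero functional takes the value $1$ on exactly $0$ or $4$ of them, so $\delta'=4$. Hence $\mathrm{rank}(G^{c})=3$ cannot be forced, and the rank argument as you set it up collapses (as you yourself suspected). What actually saves the step is a second-moment count: for any seven distinct nonzero columns $t\in F_{2}^{5}$ one has $\sum_{x}w(x)=7\cdot 16=112$ and $\sum_{x}w(x)^{2}=7\cdot 16+42\cdot 8=448$ over all $x\in F_{2}^{5}$, so $\sum_{x}w(x)\bigl(4-w(x)\bigr)=0$; since the minimum distance of $\mathcal{D}$ forces $w(x)\le 4$, every term is nonnegative and therefore every anticode weight is $0$ or $4$. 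Consequently every weight of $\mathcal{D}$ is $32s+12$ or $32s+16$, both divisible by $4$, so $\mathcal{D}$ is doubly even, hence self-orthogonal with $h(\mathcal{D})=5$ --- exactly the assertion the paper makes without proof, and it holds even for the rank-$4$ configuration above. Your fallback of citing \cite{liuar} for the hull bound may be legitimate, but the subspace classification you hoped to prove cannot be the mechanism behind it.
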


\begin{proof}

A $[63s+25,6,32s+12]$ code has $\sigma=44$ and then its defining
vector  $L$ satisfies $s-1\leq l_{i}\leq s+2$.

If $l_{max}= s+2$, then $\mathcal {C}$ has a reduced code
$[62s+23,5,32s+12]$ violating the Griesmer bound, a contradiction.

When $l_{max}= s+1$, it has a reduced  $[62s+24,5,32s+12]$ SO code.
Thus we can deduce a $[63s+25,6,32s+12]$ code  is not LCD, it
follows that a $[63s+24,6,32s+11]$ code is also not LCD.

\end{proof}

\begin{theorem} There is no  $[63s+29,6,32s+14]$ or
$[63s+28,6,32s+13]$  LCD code.
\end{theorem}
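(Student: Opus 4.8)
The plan is to reuse the defining-vector and reduced-code machinery of the preceding theorems, adding one observation about the anti-matrix for the dimension-five reduction. For $\mathcal{C}=[63s+29,6,32s+14]$ I would first record $\sigma=32(63s+29)-63(32s+14)=46$, so the defining vector obeys $s-1\le l_i\le s+2$; since the $63$ coordinates sum to $63s+29$ the average exceeds $s$, which forces $l_{max}\in\{s+1,s+2\}$. The easy branch is $l_{max}=s+2$: here $\mathcal{C}$ has a reduced code $[62s+27,5,\ge 32s+14]$, but the Griesmer length of a $[\,\cdot\,,5,32s+14]$ code is $62s+28$, so this reduced code cannot exist and the case is empty.

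The substantive branch is $l_{max}=s+1$, where the reduced code is $\mathcal{C}_1=[62s+28,5,\ge 32s+14]$; since $62s+28$ is exactly the Griesmer length, its distance is pinned to $32s+14$. I would then run the same analysis on $\mathcal{C}_1$: one finds $\sigma'=14$ and $2s\le l'_i\le 2s+2$, so $l'_{max}\in\{2s+1,2s+2\}$. If $l'_{max}=2s+2$, a further reduction yields $[60s+26,4,\ge 32s+14]$, one short of the Griesmer length $60s+27$, hence impossible. If $l'_{max}=2s+1$, then $\sum l'_i=62s+28$ over $31$ coordinates forces the defining vector of $\mathcal{C}_1$ to be of type $]](2s)_3\mid(2s+1)_{28}]]$, so its anti-matrix $G_1^c$ has exactly three columns. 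By Lemma \ref{li2023lemma}, $G_1G_1^T=G_1^c(G_1^c)^T$, and this matrix has rank at most $3$; therefore $h(\mathcal{C}_1)=5-\mathrm{rank}(G_1G_1^T)\ge 2$, and Lemma \ref{li} gives that $\mathcal{C}$ is not LCD.

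The nonexistence of a $[63s+28,6,32s+13]$ LCD code should then follow formally from Lemma \ref{Bou}: if one existed, then since $k=6$ is even and $32s+13$ is odd it would extend to a $[63s+29,6,32s+14]$ LCD code, contradicting the first half. The step I expect to be the crux is the sub-case $l'_{max}=2s+1$. The natural temptation is to keep reducing down to a self-orthogonal code of dimension four, but that code turns out to have hull only $2$, and chaining Lemma \ref{li} through two successive reductions would bleed off enough hull to leave the dimension-six conclusion just out of reach. Counting the three columns of $G_1^c$ directly, and reading $h(\mathcal{C}_1)\ge 2$ straight from Lemma \ref{li2023lemma}, is precisely what lets a single application of Lemma \ref{li} finish the argument.
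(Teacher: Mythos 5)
Your proof is correct, and in the decisive case it takes a genuinely different route from the paper. You agree with the paper on the frame: $\sigma=46$, the branch $l_{max}=s+2$ dies by the Griesmer bound applied to the reduced code $[62s+27,5,\ge 32s+14]$, and the $[63s+28,6,32s+13]$ claim follows from Lemma \ref{Bou}. Where you diverge is the branch $l_{max}=s+1$: the paper simply identifies the reduced code $[62s+28,5,32s+14]$ as a MacDonald code $\mathcal{MD}_{2s}(5,2)$ and reads off $h=3$ from Lemma \ref{li2}, giving $h(\mathcal{C})\ge 2$; you instead run a second defining-vector analysis on the reduced code, pin its type to $]](2s)_3\mid(2s+1)_{28}]]$, and bound $h(\mathcal{C}_1)\ge 5-\mathrm{rank}\bigl(G_1^c(G_1^c)^T\bigr)\ge 2$ from the three columns of the anti-matrix via Lemma \ref{li2023lemma}. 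The paper's route is shorter and yields the sharper $h(\mathcal{C})\ge2$, but it silently relies on every $[62s+28,5,32s+14]$ code being the MacDonald code; your computation effectively supplies that missing justification at the level of defining vectors (which is all that matters for $GG^T$), at the cost of one extra layer of case analysis and the slightly weaker conclusion $h(\mathcal{C})\ge1$ --- which is still enough for ``not LCD.'' Your closing remark is also on point: since Lemma \ref{li} loses one unit of hull per reduction, descending to the dimension-four level would not close the argument, so stopping at dimension five and invoking the anti-matrix rank bound is the right move.
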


\begin{proof}

A $[63s+29,6,32s+14]$  code has $\sigma=46$ and then its defining
vector $L$ satisfies $s-1\leq l_{i}\leq s+2$.

If $l_{max}= s+2$, then $\mathcal {C}$ has a reduced code
$[62s+27,5,32s+14]$ violating the Griesmer bound, a contradiction.

When $l_{max}= s+1$,Then it has a reduced code $[62s+28,5,32s+14]$
code, which is an   $\mathcal{MD}_{s}(5,2)$ code and  has $h=3$ by
Lemma \ref{li2}. Hence a $[63s+29,6,32s+14]$  code has $h\geq 2 $,
thus there is no $[63s+29,6,32s+14]$ LCD code  according to lemma
\ref{li}, which implies a $[63s+28,6,32s+13]$ LCD code does not
exist.
\end{proof}

\begin{theorem} There is no  $[63s+30,6,32s+14]$ or
$[63s+29,6,32s+13]$  LCD code.
\end{theorem}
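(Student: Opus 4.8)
The plan is to follow the reduced-code strategy of the preceding theorems, disposing of $\mathcal{C}=[63s+30,6,32s+14]$ first and then reading off the companion statement about $[63s+29,6,32s+13]$ via Lemma \ref{Bou}. First I would compute the auxiliary quantity $\sigma=2^{5}\cdot(63s+30)-(32s+14)\cdot 63=78$ and use it to bound the entries of a defining vector $L$ of $\mathcal{C}$, obtaining $s-2\le l_i\le s+2$. Since the average entry $\frac{63s+30}{63}$ already exceeds $s$, at least one $l_i\ge s+1$, so the maximal multiplicity satisfies $l_{max}\in\{s+1,s+2\}$; this dichotomy is the only split that must be examined.

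In the case $l_{max}=s+2$, collapsing the $s+2$ columns that realise the maximum produces a reduced code of length $63s+30-(s+2)=62s+28$, dimension $5$ and minimum distance $\ge 32s+14$. A Griesmer count shows that a $[62s+28,5,32s+15]$ code would need length at least $62s+30$, so the reduced code has distance exactly $32s+14$ and hence shares the parameters of $\mathcal{MD}_{2s}(5,2)$; by Lemma \ref{li2} (equivalently by the dimension-$5$ hull results of \cite{liuar}) its hull dimension is $3\ge 2$, and Lemma \ref{li} forces $\mathcal{C}$ to be non-LCD. This is exactly the reduced code already met in the $l_{max}=s+1$ branch of the previous theorem, so no new work is needed here.

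The case $l_{max}=s+1$ is the substantial one. Now the reduced code has length $62s+29$, and the same Griesmer estimate forces it to be a $[62s+29,5,32s+14]$ code. I would invoke the classification of hulls of dimension-$5$ codes with these parameters from \cite{liuar} to conclude $h\ge 2$, after which Lemma \ref{li} again yields that $\mathcal{C}$ is not LCD. If a self-contained argument is wanted, one descends once more: with $\sigma_5=16\cdot(62s+29)-(32s+14)\cdot 31=30$ the maximal entry of the dimension-$5$ defining vector is confined to $\{2s+1,2s+2\}$, and each subcase reduces to a dimension-$4$ code (length $60s+27$ or $60s+28$, distance $32s+14$) whose associated anti-code is tiny ($\sigma_4\in\{6,14\}$, small maximal anti-weight) and can be enumerated directly to exhibit a self-orthogonal reduced code, i.e.\ hull dimension $\ge 2$. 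I expect this downward recursion, and in particular the verification that \emph{every} $[62s+29,5,32s+14]$ code (not merely the $\mathcal{MD}$-type one) carries hull at least $2$ uniformly in $s$, to be the main obstacle; it is precisely this uniformity that makes the appeal to \cite{liuar} attractive.

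Once $[63s+30,6,32s+14]$ LCD codes are excluded, the statement for $[63s+29,6,32s+13]$ is immediate: $k=6$ is even and $32s+13$ is odd, so a hypothetical $[63s+29,6,32s+13]$ LCD code would, by Lemma \ref{Bou}, extend to a $[63s+30,6,32s+14]$ LCD code, contradicting what was just established. Hence no such LCD code exists either.
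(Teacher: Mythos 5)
Your outline coincides with the paper's proof up to, but not including, the decisive step. The computation $\sigma=78$, the split on $l_{max}\in\{s+1,s+2\}$, the disposal of $l_{max}=s+2$ via the reduced $[62s+28,5,32s+14]$ code of MacDonald type with hull $3$, and the final passage from $[63s+29,6,32s+13]$ to $[63s+30,6,32s+14]$ via Lemma \ref{Bou} all match the paper. The genuine gap is the case $l_{max}=s+1$: you must show that \emph{every} $[62s+29,5,32s+14]$ code has hull at least $2$, and you offer only two unexecuted options --- an appeal to \cite{liuar} (which the paper does not invoke for this theorem, and which would have to deliver a uniform hull bound $\ge 2$, not merely non-LCD-ness) or a further descent to dimension $4$ that you describe as ``enumerate directly'' and yourself flag as the main obstacle. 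As written, the proof is incomplete exactly where the work lies.

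The paper closes this case at the dimension-$5$ level rather than by descending to dimension $4$. Writing $L'$ for the defining vector of $\mathcal{C}'=[62s+29,5,32s+14]$, it has $\sigma'=30$ and $2s-1\le l'_i\le 2s+2$, and it treats three subcases: $l'_{max}=2s+2$ is excluded by a Griesmer count on the next reduced code; if $l'_{max}=2s+1$ and $l'_{min}=2s$, the generalized anti-code has length $2$, so $\mathrm{rank}\bigl(G'(G')^{T}\bigr)\le 2$ and $h(\mathcal{C}')\ge 3$ by Lemma \ref{li2023lemma}; if $l'_{min}=2s-1$, the two counting equations force $L'$ to have type $]](2s-1)_{1}\mid(2s+1)_{30}]]$, whose anti-vector $]](2)_{1}\mid(0)_{30}]]$ yields a self-orthogonal code. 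In every branch $h(\mathcal{C}')\ge 3$, hence $h(\mathcal{C})\ge 2$ by Lemma \ref{li}. Your sketched dimension-$4$ descent would not obviously recover this: the $[60s+27,4,\ge 32s+14]$ code you would meet satisfies the Griesmer bound with equality (it is of type $\mathcal{MD}(4,2)$ with hull $2$, not $4$), so ``exhibit a self-orthogonal reduced code'' fails in that branch, and chaining Lemma \ref{li} twice from a hull of only $2$ at dimension $4$ loses too much to conclude that $\mathcal{C}$ is not LCD. The rank bound via the short anti-code at dimension $5$ is what makes the estimate strong enough, and it is the ingredient your proposal is missing.
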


\begin{proof}
 If $\mathcal {C}$ is a $[63s+30,6,32s+14]$ linear code, then
$\sigma(\mathcal{C})=32\times2+14=78$ and its defining vector $L$
satisfies $s-1\leq l_{i}\leq s+2$.

(1) If $l_{max}=s+2$, then $\mathcal {C}$ has a reduced code
$\mathcal{D}=[62s+28,5,32s+14]$ with $h(\mathcal{D})=3$ by lemma
\ref{li2}, hence $\mathcal{C}$ is not LCD  according to lemma
\ref{li}.

(2) If $l_{max}=s+1$,  a $[63s+30,6,32s+14]$ code has a reduced code
$\mathcal{C}'=[31\times 2s+29,5,16\times2s+14]$ .

For  $\mathcal{C}'$, $\sigma(\mathcal{C}')=16+14=30$ and its
defining vector $L'$ satisfies $2s-1\leq l'_{i}\leq 2s+2$.

(2.1) If $l'_{max}= 2s+2$, then $\mathcal{C}'$ has a reduced code
$[30\times2s+27,5,16\times2s+14]$=$[15\times(4s+1)+12,4,8\times(4s+1)+6]$
violating the Griesmer bound, a contradiction.

(2.2) If $l'_{max}= 2s+1$ and $l'_{min}= 2s$,    $\mathcal {C}$ has
a
 generalized anti-code    $(2,2^{6},\{2\})$ by lemma \ref{li2023lemma}.
Owing to the maximum weight of $2^{6}$ codewords is 2, so it is easy
to know rank$(\mathcal {C}') \le 2$. It then follows that
$h(\mathcal {C}')=5-$rank $(\mathcal {C}'({C}')^{T})\ge 5-2=3$.

(2.3) If $l'_{max}= 2s+1$ and $l'_{min}= 2s-1$, let $m_{1}$, $m_{2}$
and $m_{3}$ be the numbers of equal $2s-1$, $2s$ and $2s+1$,
respectively. Then the following two equations hold:
\begin{eqnarray*}
% \nonumber to remove numbering (before each equation)
  m_{1}+m_{2}+m_{3} &=& 31 \\
 (2s-1)\times m_{1}+2s\times m_{2}+(2s+1)\times m_{3} &=& 31\times
 2s+29
\end{eqnarray*}

Solving the above equation system, we can obtain the only one
integer solution: $$ m_{1}=1, m_{2}=0, m_{3}=30.$$ It follows that
the defining vector $L'$ of $\mathcal {C}'$ and the corresponding
anti-vector $L'^{c}$ have the following types:
 $$L': ]](2s-1)_{1}|(2s)_{0}|(2s+1)_{30}]]\quad \hbox{and} \quad L'^{c}: ]](2)_{1}|(1)_{0}|(0)_{30}]].$$
From the form of $L'^{c}$, one can derive that
rank$(G^{c}(G^{c})^{T})=0$ and $\mathcal{C}'$ is SO and
$h(\mathcal{C}')=5$.

Summarizing previous discussions of (1) and (2), we know $h(\mathcal
{C})\ge 2$ by lemma \ref{li} and then a $[63s+61,6,32s+30]$ code
does not exist. Hence, a $[63s+60,6,32s+29]$ LCD code doe not exist.

\end{proof}

\begin{theorem} There is no  $[63s+31,6,32s+15]$, $[63s+32,6,32s+15]$, $[63s+32,6,32s+16]$  or
$[63s+33,6,32s+16]$  LCD code.
\end{theorem}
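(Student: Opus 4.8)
The plan is to prove directly only the two even-distance parameters, $[63s+32,6,32s+16]$ and $[63s+33,6,32s+16]$, and then deduce the two odd-distance parameters from Lemma \ref{Bou}. Indeed, if a $[63s+31,6,32s+15]$ LCD code existed, then since its distance $32s+15$ is odd, Lemma \ref{Bou} would yield a $[63s+32,6,32s+16]$ LCD code; likewise a $[63s+32,6,32s+15]$ LCD code would force a $[63s+33,6,32s+16]$ LCD code. So once the two even cases are shown impossible, the two odd cases follow by contradiction. For each even case I would compute $\sigma=32n-63d$, read off the defining-vector window $\lceil\frac{1}{32}(d-\sigma)\rceil\le l_i\le\lfloor\frac{1}{32}(d+\sigma)\rfloor$, pass to a reduced code, and apply Lemma \ref{li}.

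For $\mathcal{C}=[63s+32,6,32s+16]$ one gets $\sigma=16$, so $l_i\in\{s,s+1\}$, and since the $63$ multiplicities must sum to $63s+32$ we have $l_{max}=s+1$. Reducing by a column of multiplicity $s+1$ produces $\mathcal{C}_1=[62s+31,5,\ge 32s+16]$. The key observation is that $\sigma(\mathcal{C}_1)=16(62s+31)-31(32s+16)=0$, so $\Lambda=0$ and the relation $L^{T}=P_5^{-1}W^{T}$ forces the defining vector of $\mathcal{C}_1$ to be the constant $(2s+1)\mathbf{1}_{31}$; hence $\mathcal{C}_1\cong(2s+1)\mathcal{S}_5$ is self-orthogonal with $h(\mathcal{C}_1)=5$, and $\mathcal{C}$ is not LCD by Lemma \ref{li}.

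For $\mathcal{C}=[63s+33,6,32s+16]$ one gets $\sigma=48$, so $s-1\le l_i\le s+2$, and the sum constraint forces $l_{max}\in\{s+1,s+2\}$. If $l_{max}=s+2$, the reduced code is again $[62s+31,5,\ge 32s+16]\cong(2s+1)\mathcal{S}_5$, self-orthogonal, so $\mathcal{C}$ is not LCD exactly as above. If $l_{max}=s+1$, the reduced code is $\mathcal{C}_1=[62s+32,5,\ge 32s+16]$ with $\sigma(\mathcal{C}_1)=16$ and window $2s\le l'_i\le 2s+2$. Here I would take a second reduction: a short count shows that $31$ multiplicities bounded by $2s+1$ sum to at most $62s+31<62s+32$, so $l'_{max}=2s+2$ is forced, and reducing once more gives $[60s+30,4,\ge 32s+16]$, whose vanishing $\sigma$ makes it $(4s+2)\mathcal{S}_4$, self-orthogonal with hull dimension $4$. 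Two applications of Lemma \ref{li} then give $h(\mathcal{C})\ge 2$, so $\mathcal{C}$ is not LCD.

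The $\sigma$ and window computations are routine; the load-bearing phenomenon is that every chain of reductions collapses onto a replicated simplex code (and hence an SO code) the instant the reduced code has $\sigma=0$. The only step requiring genuine care is the $l_{max}=s+1$ branch of $[63s+33,6,32s+16]$, where a second reduction is unavoidable and the counting argument forcing $l'_{max}=2s+2$ must be made before that reduction can be carried out.
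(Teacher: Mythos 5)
Your proof is correct and follows essentially the same route as the paper: compute $\sigma$, pin down the defining-vector window, pass to reduced codes that collapse onto replicated simplex (hence self-orthogonal) codes, apply Lemma \ref{li}, and settle the two odd-distance parameters via Lemma \ref{Bou}. You are in fact slightly more careful than the paper in two spots---you derive $h(\mathcal{C})\ge 4$ for $[63s+32,6,32s+16]$ through the reduction to $(2s+1)\mathcal{S}_5$ rather than asserting outright that the code is SO, and you supply the counting argument forcing $l'_{max}=2s+2$ before the second reduction---but these are refinements of the same argument, not a different one.
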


\begin{proof} A $[63s+32,6,32s+16]$ code is SO  and then a
$[63s+31,6,32s+15]$ LCD  code doesn't exist.

 If $\mathcal {C}$ is a $[63s+33,6,32s+16]$ linear code, then
$\sigma=32+16$ and its defining vector $L$ satisfies $s-1\leq
l_{i}\leq s+2$.

If $l_{max}= s+2$, then $\mathcal {C}$ has a reduced code
$[62s+31,5,32s+16]=[31\times (2s+1),5,16\times (2s+1)]$, which is
SO.

If $l_{max}= s+1$,  a $[62s+32,5,32s+16]=[31\times
(2s+1)+1,5,16\times (2s+1)]$ code has a reduced $[30\times
(2s+1),4,16\times (2s+1)]$ SO code, hence a $[63s+33,6,32s+16]$ has
$h\geq 2$,  $[63s+33,6,32s+16]$ code is not LCD. It follows that a
$[63s+32,6,32s+15]$ code is not  LCD.

\end{proof}

\begin{theorem} There is no   $[63s+41,6,32s+20]$ or $[63s+40,6,32s+19]$
 LCD code.
\end{theorem}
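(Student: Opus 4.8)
The plan is to settle $[63s+41,6,32s+20]$ first by the defining-vector and reduced-code technique used in the preceding theorems, and then to deduce the statement for $[63s+40,6,32s+19]$ from Lemma \ref{Bou}. First I would compute $\sigma=32(63s+41)-63(32s+20)=52$ for a hypothetical $\mathcal{C}=[63s+41,6,32s+20]$ code, so that its defining vector satisfies $s-1\le l_i\le s+2$. Since $\sum l_i=63s+41>63s$ forces $l_{max}\ge s+1$, only two cases remain. If $l_{max}=s+2$, reducing by a column of multiplicity $s+2$ yields a $[62s+39,5,\ge 32s+20]$ code, but the Griesmer bound for dimension $5$ and distance $32s+20$ requires length at least $62s+40$, a contradiction. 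Hence $l_{max}=s+1$, and the reduced code $\mathcal{C}'=[62s+40,5,32s+20]$ meets the Griesmer bound.

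Next I would iterate the reduction. For $\mathcal{C}'$ one gets $\sigma'=20$ and $2s\le l'_i\le 2s+2$; writing $m_1,m_2,m_3$ for the numbers of coordinates equal to $2s,2s+1,2s+2$, the relations $m_1+m_2+m_3=31$ and $m_2+2m_3=40$ give $m_1=m_3-9$, so $m_3\ge 9$ and $l'_{max}=2s+2$. Reducing $\mathcal{C}'$ by a column of multiplicity $2s+2$ produces $\mathcal{C}''=[60s+38,4,32s+20]$, whose distance is again pinned to exactly $32s+20$ by the Griesmer bound. Its multiplicity equations force $7$ coordinates equal to $4s+2$ and $8$ equal to $4s+3$, so with $a=4s+3$ the anti-vector of $\mathcal{C}''$ has type $]](1)_7\mid(0)_8]]$; by Lemma \ref{li2023lemma} the hull of $\mathcal{C}''$ is governed by $G^{c}(G^{c})^{T}=\sum_{i\in S}\alpha_i\alpha_i^{T}\pmod 2$, where $S$ is the set of the $7$ columns of the anti-matrix.

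The crux, and the step I expect to be the main obstacle, is to show that $\mathcal{C}''$ is necessarily self-orthogonal, i.e. that $S$ must be the set of nonzero vectors of a $3$-dimensional subspace of $F_2^4$. This is where the minimum-distance hypothesis does the real work: each codeword weight of $\mathcal{C}''$ equals $8(4s+3)-|S\cap H_x|$, where $H_x$ is the set of columns $\alpha_i$ with $x\alpha_i=1$, so $d\ge 32s+20$ is equivalent to every hyperplane of $F_2^4$ meeting $S$ in at least $3$ points. A double-counting argument over the $15$ hyperplanes, using $\sum_H|S\cap H|=49$ and $\sum_H\binom{|S\cap H|}{2}=63$ together with $|S\cap H|\ge 3$, forces the distribution in which one hyperplane contains all $7$ points of $S$ and the other fourteen contain exactly $3$ each; hence $S=V\setminus\{0\}$ for a $3$-dimensional subspace $V$, and a short parity check gives $\sum_{v\in V\setminus 0}vv^{T}\equiv 0\pmod 2$. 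Thus $\mathcal{C}''$ is self-orthogonal with $h(\mathcal{C}'')=4$.

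Finally I would propagate the hull back up the reduction chain. Applying Lemma \ref{li} to $\mathcal{C}''$ as a reduced code of $\mathcal{C}'$ gives $h(\mathcal{C}')\ge 3$, and applying it once more to $\mathcal{C}'$ as a reduced code of $\mathcal{C}$ gives $h(\mathcal{C})\ge 2$, so $\mathcal{C}$ is not LCD; hence no $[63s+41,6,32s+20]$ LCD code exists. For the second family, if a $[63s+40,6,32s+19]$ LCD code existed, then since $k=6$ is even and $32s+19$ is odd, Lemma \ref{Bou} would produce a $[63s+41,6,32s+20]$ LCD code, contradicting what was just shown; therefore no $[63s+40,6,32s+19]$ LCD code exists either.
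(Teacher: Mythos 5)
Your argument is correct and follows essentially the same route as the paper: rule out $l_{max}=s+2$ by the Griesmer bound, reduce twice to a $[60s+38,4,32s+20]$ code, show that code is self-orthogonal, propagate the hull back up via Lemma \ref{li}, and finish the odd-distance case with Lemma \ref{Bou}. The only difference is that you supply the details (the count forcing $l'_{max}=2s+2$ and the hyperplane double-counting showing the dimension-$4$ reduced code is SO) that the paper merely asserts.
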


\begin{proof}

A $[63s+41,6,32s+20]$ has $\sigma=52$ and then its defining vector
$L$ with $s-1\leq l_{i}\leq s+2$.

If $l_{max}=s+2$, then $\mathcal {C}$ has a reduced code
$[62s+39=31(2s+1)+8,5,32s+20=16(2s+1)+4]$ violating the Griesmer
bound, a contradiction.

When $l_{max}=s+1$, it has a reduced  $[62s+40,5,32s+20]$ code, and
$[62s+31+9,5,32s+16+4]$ has a reduced  $[60s+30+8,4,32s+16+4]$ SO
code. Thus one can deduce a $[63s+41,6,32s+20]$ code is not LCD
code, it follows that a $[63s+40,6,32s+19]$ is not LCD.
\end{proof}

\begin{theorem}  There is no $[63s+44,6,32s+21]$ or $[63s+45,6,32s+22]$ LCD code.
\end{theorem}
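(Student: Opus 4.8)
The plan is to prove the nonexistence for the optimal code $[63s+45,6,32s+22]$ first and then to recover the statement for $[63s+44,6,32s+21]$ by an extension argument. Note that $32s+22$ saturates the Griesmer bound for $k=6$ (indeed $\sum_{i=0}^{5}\lceil(32s+22)/2^{i}\rceil=63s+45$), so I would begin from a hypothetical LCD code $\mathcal{C}=[63s+45,6,32s+22]$, compute $\sigma=32(63s+45)-63(32s+22)=54$, and read off from the displayed multiplicity inequality that its defining vector satisfies $s-1\le l_i\le s+2$. Since the average multiplicity $(63s+45)/63$ exceeds $s$, we have $l_{max}\in\{s+1,s+2\}$. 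The case $l_{max}=s+2$ yields a reduced code $[62s+43,5,\ge 32s+22]$, contradicting the Griesmer bound (a $[\,\cdot\,,5,32s+22]$ code needs length at least $62s+44$). Hence $l_{max}=s+1$ and $\mathcal{C}$ has the Griesmer-tight reduced code $\mathcal{D}'=[62s+44,5,32s+22]$. The whole problem is thereby reduced to proving $h(\mathcal{D}')\ge 2$, for then Lemma \ref{li} immediately gives that $\mathcal{C}$ is not LCD.

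To study $\mathcal{D}'$ I would iterate the same reduction one dimension lower. Computing $\sigma'=16(62s+44)-31(32s+22)=22$ gives $2s\le l'_i\le 2s+2$, and since the average again exceeds $2s+1$, necessarily $l'_{max}=2s+2$. The corresponding reduced code is $\mathcal{D}''=[60s+42,4,32s+22]$, which is precisely the code $\mathcal{MD}_{4s+2}(4,2)$ of Lemma \ref{li2}; its Griesmer-tightness pins the defining vector uniquely (solving $m_2+m_3=15$, $2m_2+3m_3=42$) to type $]](4s+2)_3\mid(4s+3)_{12}]]$. Passing to the generalized anti-code, $\delta=8(4s+3)-(32s+22)=2$ forces the three weight-one columns of the anti-matrix to be three distinct nonzero vectors $c_1,c_2,c_3$ of $F_2^{4}$ with $c_1+c_2+c_3=\mathbf{0}$; by Lemma \ref{li2023lemma} the Gram matrix $G''(G'')^{T}$ then equals $c_1c_2^{T}+c_2c_1^{T}$, which has rank $2$, so $h(\mathcal{D}'')=2$.

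The delicate point, and the step I expect to be the main obstacle, is upgrading this to $h(\mathcal{D}')\ge 2$. Feeding $h(\mathcal{D}'')=2$ back into Lemma \ref{li} only returns $h(\mathcal{D}')\ge 1$, one short of what is needed; moreover a direct inspection of the dimension-five generalized anti-code shows that an abstract configuration with $h(\mathcal{D}')=1$ is \emph{a priori} possible, so the parity/weight constraint $\delta'=16(2s+2)-(32s+22)=10$ on the anti-code must genuinely be used to exclude it. I would therefore either invoke the dimension-five hull classification of Ref. \cite{liuar} for optimal codes of the form $[31m+13,5,16m+6]$ (whose sibling $m=2s$ is the code $[62s+13,5,32s+6]$ already shown there to satisfy $h\ge 3$), or carry out a dedicated analysis of the admissible multiplicity vectors of $\mathcal{D}'$, tracking how the three deficient directions $c_1,c_2,c_3$ of $\mathcal{D}''$ lift to the two preimages $(0,c_i),(1,c_i)$ in $F_2^{5}$, so as to rule out $h(\mathcal{D}')=1$. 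Once $h(\mathcal{D}')\ge 2$ is secured, Lemma \ref{li} shows $\mathcal{C}$ is not LCD, establishing that no LCD $[63s+45,6,32s+22]$ code exists.

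For $[63s+44,6,32s+21]$ I would argue by contradiction through Lemma \ref{Bou}. Since $d_a(63s+44)=32s+21$, no linear $[63s+44,6,32s+22]$ code exists at all, so an LCD $[63s+44,6,32s+21]$ code would force $d_{l}(63s+44,6)=32s+21$, an odd value with $k=6$ even. Lemma \ref{Bou} would then give $d_{l}(63s+45,6)\ge 32s+22$, i.e. an LCD $[63s+45,6,32s+22]$ code, contradicting the previous step. Hence no LCD $[63s+44,6,32s+21]$ code exists either, completing the proof.
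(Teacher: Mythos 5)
Your reduction matches the paper's up to the decisive step, but the decisive step is missing. Both you and the paper compute $\sigma=54$, force $s-1\le l_i\le s+2$, eliminate $l_{max}=s+2$ by the Griesmer bound on the reduced $[62s+43,5,32s+22]$ code, and pass in the remaining case to the reduced code $\mathcal{D}'=[62s+44,5,32s+22]$; both also derive the $[63s+44,6,32s+21]$ case from Lemma \ref{Bou} at the end. The entire content of the theorem is then the claim $h(\mathcal{D}')\ge 2$, and you do not prove it. Your descent to dimension $4$ is correct as far as it goes (the $[60s+42,4,32s+22]$ reduced code has hull dimension exactly $2$, by the anti-code computation with $\delta=2$ and $c_1+c_2+c_3=\mathbf{0}$), but, as you yourself observe, feeding $h=2$ into Lemma \ref{li} only yields $h(\mathcal{D}')\ge 1$, which proves nothing about $\mathcal{C}$ being LCD. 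You then list two possible ways to close the gap without executing either, so the argument is incomplete precisely where it matters. The paper closes it by rewriting $[62s+44,5,32s+22]=[31(2s+1)+13,\,5,\,16(2s+1)+6]$ and invoking the dimension-five classification of \cite{liuar}, which gives $h(\mathcal{D}')\ge 3$ and hence $h(\mathcal{C})\ge 2$ by Lemma \ref{li}; this is essentially your first proposed option, but note the instance needed here is $m=2s+1$ (odd), not the "sibling" $m=2s$ you point to, so the citation has to be checked for that case and cannot be inferred from your dimension-four analysis.

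A smaller structural difference: the paper does not treat $l_{max}=s+1$ uniformly. It separates the sub-case $l_{min}=s$, where $\mathcal{C}$ is $s$ copies of $\mathcal{S}_6$ juxtaposed with a projective $[45,6,22]$ code; by the classification in \cite{Bou2006} this projective code is unique and one computes $h(\mathcal{C})=4$ directly, the reduced-code argument being reserved for $l_{min}=s-1$. Your uniform route through $\mathcal{D}'$ would be legitimate and arguably cleaner, but only once $h(\mathcal{D}')\ge 2$ is actually established for every $[62s+44,5,32s+22]$ code; until then both sub-cases remain open in your write-up.
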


\begin{proof}  If $\mathcal {C}$ is a $[63s+45,6,32s+22]$ linear
code, then $\sigma=32+22$ and its defining vector $L$ satisfies
$s-1\leq l_{i}\leq s+2$.

If $l_{max}= s+2$, then $\mathcal {C}$ has a reduced code
$[62s+43,5,32s+22]$ violating the Griesmer bound, a contradiction.

If $l_{max}= s+1$ and $l_{min}= s$,  there is a unique
$[63s+45,6,32s+22]$ in  \cite{Bou2006}, one can calculate
$h(\mathcal {C})=4$.

If $l_{max}= s+1$ and $l_{min}= s-1$, then  $\mathcal {C}$ has a
reduced code $D=[62s+44,5,32s+22]=[31(2s+1)+13,5, 16(2s+1)+6]$,
which has $h(D)\geq 3$ according to Ref. \cite{liuar}, thus
$h(\mathcal {C})$$\geq 2$. \end{proof}

\begin{theorem}  There is no
 $[63s+47,6,32s+23]$, $[63s+48,6,32s+24]$, $[63s+48,6,32s+23]$ $[63s+49,6,32s+24]$
LCD code.

\end{theorem}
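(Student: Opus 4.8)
The plan is to first collapse the four codes onto the two with even minimum distance by Lemma \ref{Bou}, and then to destroy each of those by descending through reduced codes to a self-orthogonal code. Since $k=6$ is even, if a $[63s+47,6,32s+23]$ LCD code existed then, as $32s+23$ is odd, Lemma \ref{Bou} would yield a $[63s+48,6,32s+24]$ LCD code; likewise a $[63s+48,6,32s+23]$ LCD code would produce a $[63s+49,6,32s+24]$ LCD code. Hence it suffices to rule out the two even-distance codes $[63s+48,6,32s+24]$ and $[63s+49,6,32s+24]$.

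For $\mathcal{C}=[63s+48,6,32s+24]$ one computes $\sigma=32(63s+48)-63(32s+24)=24$, so the defining vector satisfies $s\le l_i\le s+1$; counting the $63$ coordinates forces exactly $15$ entries equal to $s$ and $48$ equal to $s+1$, whence $l_{max}=s+1$ (these are the parameters of an $\mathcal{MD}_s(6,4)$ code). I would pass to the reduced code $[62s+47,5,32s+24]$, the Griesmer bound forcing the distance to remain $32s+24$, and reduce once more. For this $5$-dimensional code $\sigma=8$, giving $l'_i\in\{2s+1,2s+2\}$ with exactly $16$ entries equal to $2s+2$, so $l'_{max}=2s+2$; its reduced code is $[60s+45,4,32s+24]$, for which $\sigma=0$, i.e. the constant-weight code generated by $(4s+3)\mathbf{S}_4$. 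Since $\mathbf{S}_4\mathbf{S}_4^{T}=0$ over $F_2$ (rows of even weight with even pairwise inner products), this code is self-orthogonal with $h=4$. Two applications of Lemma \ref{li} then give $h([62s+47,5,32s+24])\ge 3$ and $h(\mathcal{C})\ge 2$, so $\mathcal{C}$ is not LCD.

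For $\mathcal{C}=[63s+49,6,32s+24]$ one gets $\sigma=56$ and $s-1\le l_i\le s+2$; since $l_{max}\ge\lceil n/63\rceil=s+1$, only $l_{max}\in\{s+1,s+2\}$ occur. If $l_{max}=s+2$, the reduced code is again $[62s+47,5,32s+24]$, which by the previous paragraph has $h\ge 3$, so $h(\mathcal{C})\ge 2$ and $\mathcal{C}$ is not LCD. If $l_{max}=s+1$, the reduced code is $[62s+48,5,32s+24]=[31(2s+1)+17,5,16(2s+1)+8]$, and it remains to show that this code has $h\ge 2$; the conclusion then follows from Lemma \ref{li}.

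The main obstacle is exactly this last step. Unlike the $l_{max}=s+2$ branch, the $5$-dimensional code $[31(2s+1)+17,5,16(2s+1)+8]$ lies one above the Griesmer bound and admits several defining-vector types (here $\sigma=24$, so $l'_i\in\{2s,2s+1,2s+2,2s+3\}$). I would treat it by cases on $l'_{max}$: the large values $l'_{max}=2s+3$ reduce, as above, to a self-orthogonal $(4s+3)\mathbf{S}_4$-type code and are harmless, while the remaining types I would handle either by invoking the hull estimates for this $5$-dimensional family from Ref. \cite{liuar} (the same family already used to bound the reduced codes in the earlier theorems), or self-containedly through the generalized anti-code of Lemma \ref{li2023lemma}: taking $a=l'_{max}$, the anti-code has only a handful of columns (for instance $14$ columns of maximal weight $8$ when $l'_{max}=2s+2$), and one classifies the small column sets in $F_2^{5}$ of bounded maximal weight, showing each forces $\mathrm{rank}\,\bigl(G'^{c}(G'^{c})^{T}\bigr)\le 3$, hence $h\ge 2$. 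The delicate point is that one must establish a non-trivial hull for \emph{every} admissible anti-code type, not merely exclude the LCD ones, because the propagation in Lemma \ref{li} loses one unit of hull dimension at each reduction and so cannot tolerate a long descending chain in this branch.
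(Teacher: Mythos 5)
Your overall strategy is the paper's: collapse the odd-distance cases onto the even ones via Lemma \ref{Bou}, then kill the even-distance codes by passing to reduced codes and applying Lemma \ref{li}. Your treatment of $[63s+48,6,32s+24]$ is correct and in fact more explicit than the paper's (which simply asserts the code is self-orthogonal): the forced type $]](s)_{15}\mid(s+1)_{48}]]$, the two reductions down to the equidistant $[60s+45,4,32s+24]$ code $(4s+3)\mathbf{S}_4$ with $h=4$, and the two applications of Lemma \ref{li} giving $h(\mathcal{C})\ge 2$ are all sound (any code with the intermediate parameters is pinned down by the $\sigma$-computation, so the chain does not depend on a particular representative). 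The $l_{max}=s+2$ branch of $[63s+49,6,32s+24]$ is likewise fine and matches the paper.

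The genuine gap is the one you name yourself: the branch $l_{max}=s+1$ of $[63s+49,6,32s+24]$, where the reduced code is $[62s+48,5,32s+24]=[31(2s+1)+17,5,16(2s+1)+8]$ and you must show $h\ge 2$ for \emph{every} such code. You only sketch two possible strategies (case analysis on $l'_{max}$ via generalized anti-codes, or citing \cite{liuar}) without carrying either out, and the anti-code sketch is nontrivial to complete because this length sits above the Griesmer bound and admits several defining-vector types. The paper closes exactly this branch by splitting on $l_{min}$: when $l_{min}=s$ the defining vector is two-valued and the classification in \cite{Bou2006} yields exactly two $[63s+49,6,32s+24]$ codes, with hull dimensions $6$ and $5$; when $l_{min}=s-1$ it quotes the bound $h\ge 3$ for the reduced $[31(2s+1)+17,5,16(2s+1)+8]$ code from \cite{liuar} and applies Lemma \ref{li}. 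Until you either import one of those external results explicitly or complete the anti-code classification for all admissible types, the nonexistence of $[63s+49,6,32s+24]$ (and hence of $[63s+48,6,32s+23]$) LCD codes is not established.
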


\begin{proof}  If $\mathcal {C}$ is a $[63s+49,6,32s+24]$, then
$\sigma=56$ and its defining vector $L$ satisfies $s-1\leq l_{i}\leq
s+2$.

If $l_{max}= s+2$, then $\mathcal {C}$ has a reduced code
$[62s+47,5,32s+24]$ $=[31(2s+1)+16,5,16(2s+1)+8]$ which is an SO
code, and it easily follows that $h(\mathcal {C}) \ge 4$.

If $l_{max}= s+1$ and  $l_{min}= s$, there are two
$[63s+49,6,32s+24]$ linear codes  in \cite{Bou2006}. Thus we can
further calculate
 $h(\mathcal {C}_{1})$$=6$ and $h(\mathcal {C}_{2})=5$.

If $l_{max}= s+1$ and $l_{min}= s-1$, then  $\mathcal {C}$ has a
reduced code $D=[62s+48,5,32s+24]=[31(2s+1)+17,5, 16(2s+1)+8]$, this
reduced code  has $h(D)\geq 3$ by  Ref. \cite{liuar}, thus
$h(\mathcal {C})$$\geq 2$.

It is easy to know that a $[63s+48,6,32s+24]$ code is SO and then a
$[63s+47,6,32s+23]$  code is not LCD.

Summarizing previous discussions, one can obtain that all of the
 $[63s+47,6,32s+23]$,
$[63s+48,6,32s+24]$, $[63s+48,6,32s+23]$ and $[63s+49,6,32s+24]$
codes are not LCD.
\end{proof}

\begin{lemma}\cite{Li2023}\label{lilem} Let $s\geq 0$, $N_{k}=2^{k}-1$ and $1\leq a\leq 11$.
If $k\geq 4$ for $a=1,3,4,7,8$,  $k\geq 5$ for $a=2,6,10$ and $k\geq
7$ for $a=5,9, 11$, then the corresponding optimal
$[sN_{k}+N_{k}-a,k]$ code is not LCD.
\end{lemma}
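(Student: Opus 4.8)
The plan is to reduce the LCD question to the invertibility of a very small matrix—the Gram matrix of the anti-code—and then exploit the rigidity forced by optimality. Throughout write $N=N_k=2^k-1$ and $n=(s+1)N-a$, and note that $(s+1)$ copies of the simplex matrix $\mathbf{S}_k$ generate the $[(s+1)N,k,(s+1)2^{k-1}]$ code whose defining vector is the constant $(s+1)\mathbf{1}_N$. Deleting $a$ columns yields an $[n,k]$ code, so the first task is to pin down $d_a$. Using the Griesmer bound together with this deletion construction, I would show that for $k\ge k_0(a)$ the optimal distance is $d_a=(s+1)2^{k-1}-\delta_0$, where $\delta_0=\lfloor a/2\rfloor+1$ is the smallest maximum weight attainable by a full-support anti-code of length $a$ (equivalently, the smallest $\delta$ compatible with $\sigma=\delta(2^k-1)-a\,2^{k-1}\ge 0$); in the relevant range this also equals $\lceil a\,2^{k-1}/(2^k-1)\rceil$.

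Next I would fix an arbitrary optimal code $\mathcal{C}=[n,k,d_a]$ with generator $G$ and defining vector $L$, and use the displayed bound $\lfloor(d+\sigma)/2^{k-1}\rfloor\le l_i\le\lceil(d-\sigma)/2^{k-1}\rceil$ to confine the $l_i$ to a narrow window around $s+1$. The decisive step is to prove that optimality forces $l_{max}=s+1$. If instead $l_{max}\ge s+2$, then $\mathcal{C}$ has a reduced code of dimension $k-1$ whose length and distance can be read off from $L$; I would show that for $k\ge k_0(a)$ this reduced code violates the Griesmer bound (or, in borderline subcases, is self-orthogonal or has hull $\ge 2$, so that $\mathcal{C}$ is already non-LCD by Lemma \ref{li}). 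This is exactly the mechanism of the Theorems in Section \ref{sec3}, and I expect the precise thresholds $k_0(a)\in\{4,5,7\}$ to emerge here: below the threshold the $l_{max}=s+2$ branch survives and actually carries an LCD optimal code, which is why the hypothesis on $k$ cannot be dropped.

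Once $l_{max}=s+1$ is established, the anti-matrix $G^c$ is the $k\times a$ submatrix of $\mathbf{S}_k$ formed by the $a$ deleted columns, and by Lemma \ref{li2023lemma} and Lemma \ref{li3} the code $\mathcal{C}$ is LCD if and only if $G^c(G^c)^T$ is invertible over $F_2$. Writing $r=\mathrm{rank}\,G^c$ and letting $\mathcal{C}^a$ be the length-$a$ anti-code generated by $G^c$, the hull relation from the introduction gives $\mathrm{rank}\,G^c(G^c)^T=r-h(\mathcal{C}^a)\le r\le\min(k,a)$, so invertibility demands both $r=k$ (hence $a\ge k$) and $h(\mathcal{C}^a)=0$. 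When $a<k$ this already fails, yielding the conclusion immediately. In the remaining ``tight'' range $k_0(a)\le k\le a$ I would argue that the extremal constraint that every nonzero word of $\mathcal{C}^a$ has weight at most $\delta_0$ is so strong that it forbids all heavy words—in particular $\mathbf{1}_a$ and every word of weight exceeding $\delta_0$—which forces odd linear dependencies among the deleted columns, dropping $r$ below $k$ or producing a nonzero hull $h(\mathcal{C}^a)>0$; either way $G^c(G^c)^T$ is singular. Concretely, for $a=4,\ k=4$ the bound $\delta_0=3$ forces three of the four deleted columns to sum to zero, so $r\le 3<4$.

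The main obstacle is the combination of the two hard steps above: carrying out the Griesmer exclusion of $l_{max}\ge s+2$ sharply enough to read off the exact thresholds, and classifying the optimal anti-codes in the regime $k\le a$ well enough to certify that each one has a singular Gram matrix over $F_2$. Both are finite, case-by-case analyses in $a\in\{1,\dots,11\}$, but they carry the real content; the reduction via Lemmas \ref{li2023lemma} and \ref{li3} together with the rank bound $r\le a<k$ disposes only of the easy range $k>a$. I would organize the write-up by grouping the eleven values of $a$ according to their thresholds—first $a\in\{1,3,4,7,8\}$, then $a\in\{2,6,10\}$, then $a\in\{5,9,11\}$—and within each group handle the few boundary pairs $(a,k)$ with $k\le a$ by an explicit determination of the optimal anti-code structure.
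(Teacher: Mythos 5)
First, a point of comparison that is out of your control: the paper contains no proof of Lemma \ref{lilem} at all --- it is quoted verbatim from \cite{Li2023}. The only in-paper analogues are the Section \ref{sec3} theorems, which run exactly the two-step mechanism you describe for individual pairs $(k,t)$ with $k=6$: first exclude $l_{max}\ge s+2$ via reduced codes and the Griesmer bound (or hull estimates), then test LCD-ness through $GG^{T}=G^{c}(G^{c})^{T}$ (Lemmas \ref{li2023lemma} and \ref{li3}). Your architecture matches that, and your observation that $a<k$ forces $\operatorname{rank}G^{c}(G^{c})^{T}\le\operatorname{rank}G^{c}\le a<k$, hence singularity, is a correct and efficient disposal of those cases once $l_{max}=s+1$ is secured.

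The genuine gap is your determination of $d_{a}$. You assert that $\delta_{0}=\lfloor a/2\rfloor+1$ is ``the smallest maximum weight attainable by a full-support anti-code of length $a$,'' and that this is \emph{equivalent} to the smallest $\delta$ with $\sigma=\delta(2^{k}-1)-a2^{k-1}\ge 0$. That equivalence fails: the averaging bound is not attained for $a\in\{5,9,11\}$, where the true minimum is $\lfloor a/2\rfloor+2$. You can read this off the paper's own Table \ref{42}: for $k=6$ and $t=58,54,52$ (i.e.\ $a=5,9,11$) one has $d_{a}=32(s+1)-4,\,32(s+1)-6,\,32(s+1)-7$, not $-3,-5,-6$. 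A second-moment count proves it in general: for a rank-$r$ anti-code on $a$ nonzero columns, the $2^{r}$ codeword weights satisfy $\sum w=a2^{r-1}$ and $\sum w^{2}\ge a2^{r-1}+a(a-1)2^{r-2}$; for $a=9$, $r=4$ this forces $\sum w^{2}\ge 360$, while fifteen nonzero weights all $\le 5$ summing to $72$ give at most $14\cdot 25+4=354$, and the other ranks fail similarly. Since $d_{a}$ is the input to \emph{both} of your hard steps --- the Griesmer exclusion of $l_{max}\ge s+2$ is computed at distance $d_{a}$, and the ``classification of optimal anti-codes'' is precisely the classification of $a$-column multisets of maximum weight equal to the true $\delta$ --- the three families you mis-calibrate are analyzed against a nonexistent code, and they are exactly the three carrying the delicate threshold $k\ge 7$ (indeed, whether the averaging bound is attained is what separates the groups of $a$). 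Two smaller caveats: ``the few boundary pairs $(a,k)$ with $k\le a$'' is in fact about two dozen pairs ($a=7,8,10,11$ alone contribute twenty), and that finite check is where essentially all of the content of the lemma lives, as you partly concede.
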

From Lemma \ref{lilem}, it is natural to derive the following
conclusion.
\begin{cor} If $s\geq 0$, then
these optimal $[63s+53,6,32s+26]$,  $[63s+55,6,32s+27]$,
$[63s+56,6,32s+28]$, $[63s+57,6,32s+28]$, $[63s+59,6,32s+29]$,
$[63s+60,6,32s+30]$, $[63s+61,6,32s+30]$, $[63s+62,6,32s+31]$ linear
codes are not LCD. Then one can infer that these
$[63s+52,6,32s+25]$, $[63s+56,6,32s+27]$, $[63s+60,6,32s+29]$ codes
are also not LCD   by Lemma \ref{Bou}.
\end{cor}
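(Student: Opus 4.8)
The plan is to read the first list of non-LCD codes straight off Lemma~\ref{lilem} with $k=6$, and then to extract the second list by playing the non-existence results just obtained against the parity increment of Lemma~\ref{Bou}.

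First I would specialise Lemma~\ref{lilem} to $k=6$, so that $N_{6}=2^{6}-1=63$ and the relevant objects are the optimal $[63s+63-a,6]$ codes. Here $k=6$ meets the hypothesis $k\ge 4$ for $a\in\{1,3,4,7,8\}$ and $k\ge 5$ for $a\in\{2,6,10\}$, but fails $k\ge 7$ for $a\in\{5,9,11\}$; hence the admissible exponents are exactly $a\in\{1,2,3,4,6,7,8,10\}$, producing the eight lengths $63s+62,61,60,59,57,56,55,53$, which are precisely the lengths in the first list. To display each in the form $[63s+t,6,32s+d_{a}(t)]$ I would confirm, for every $t\in\{53,55,56,57,59,60,61,62\}$, that the stated distance equals the Griesmer distance $d_{g}(t,6)$; for instance $26+13+7+4+2+1=53$ certifies $d_{a}(53,6)=26$, and the remaining checks are identical in style. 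This matching is legitimate because item~(1) of the preceding discussion guarantees that optimal $[n,6]$ codes saturate the Griesmer bound for $30\le n\le 63$ with $n\ne 40,41$, while the additive $32s$ comes from juxtaposing $s$ copies of the simplex code $\mathcal{S}_{6}=[63,6,32]$. With the parameters matched, Lemma~\ref{lilem} delivers the first list directly.

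For the second list I would run, for each length $n\in\{63s+52,\,63s+56,\,63s+60\}$ carrying the odd target distance $d\in\{32s+25,\,32s+27,\,32s+29\}$, a short contradiction argument built on Lemma~\ref{Bou} (valid since $k=6$ is even). Suppose an LCD $[n,6,d]$ code existed; then $d_{l}(n,6)\ge d$, and since every LCD code is linear, $d_{l}(n,6)\le d_{a}(n,6)$. If $d_{l}(n,6)=d$, which is odd, then Lemma~\ref{Bou} forces $d_{l}(n+1,6)\ge d+1=d_{a}(n+1,6)$, yielding an optimal LCD code at length $n+1$; but $n+1\in\{63s+53,\,63s+57,\,63s+61\}$ already lies in the first list, a contradiction. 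If instead $d_{l}(n,6)>d$, then $d_{l}(n,6)=d_{a}(n,6)=d+1$, yielding an optimal LCD code at length $n$ itself; for $n=63s+56$ and $n=63s+60$ this length already sits in the first list (with even optimal distance $32s+28$, $32s+30$), again a contradiction, whereas for $n=63s+52$ this branch cannot occur since there $d_{a}=d$. Either way no such LCD code exists.

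The first list and the setup of the second are essentially bookkeeping: selecting the admissible $a$'s and verifying the Griesmer equalities. The one step needing genuine care is the case analysis in the second list, where I must account for the a priori possibility that $d_{l}$ at the smaller length already attains the even optimal value rather than the odd near-optimal target, and must confirm in every instance that the incremented length $n+1$ really does appear in the first list with the distance jumping from odd to even. Keeping this parity correspondence straight is the main, if modest, obstacle.
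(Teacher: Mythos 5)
Your proposal is correct and follows exactly the route the paper intends: the first list is Lemma~\ref{lilem} specialised to $k=6$ with the admissible values $a\in\{1,2,3,4,6,7,8,10\}$, and the second list follows from Lemma~\ref{Bou}. The paper states this without writing out any details, and your explicit case analysis for the Lemma~\ref{Bou} step (distinguishing $d_{l}(n,6)=d$ from $d_{l}(n,6)=d_{a}(n,6)=d+1$, and noting that the latter branch is vacuous at $n=63s+52$) correctly fills in the one point the paper leaves implicit.
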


\section{Constructions of (near) optimal $[n,6]$ LCD codes for $n\geq 51$}\label{sec4}

In last section  we have determined the nonexistence of many classes
of $[n,6,d_{n}]$ LCD codes,  and then one can know it is  optimal
LCD if there exists an $[n,6,d_{n}-1]$ LCD code  or it is near
optimal LCD if there exists an $[n,6,d_{n}-2]$ LCD code. Next we
will construct LCD codes of lengths greater than 50, the most of
which are optimal LCD  while the rest are at least near optimal LCD.
According to the following lemma in Ref. \cite{Li2022}, the optimal
LCD codes can be constructed from these two linear codes $[45,6,22]$
and $[33,6,16]$ and simplex codes of given dimension.

\begin{lemma}\cite{Li2022} \label{con}  Suppose $[n, k_{1}, d_{1}]$ is an LCD code
and there are $[n, k_{1}, d_{1}]\subseteq [n, k, d_{2}] =
\mathcal{C}$ with $h(\mathcal{C}) = k -k_{1}$. If there  is an $[m,
r, d_{3}]$ LCD code, then there is an $[n+m, k, d]$ code with $d
\geq \min\{d_{1}, d_{2} + d_{3}\}$.
\end{lemma}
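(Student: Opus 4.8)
The plan is to realize the required code by a juxtaposition (block) construction and to extract both its minimum distance and, more importantly, its LCD property from the block form of the Gram matrix. First I would choose generator matrices adapted to the chain $\mathcal{C}_{1}\subseteq\mathcal{C}$: take a $k_{1}\times n$ generator matrix $G_{1}$ of the LCD code $\mathcal{C}_{1}=[n,k_{1},d_{1}]$ and complete it to a $k\times n$ generator matrix $G=\binom{G_{1}}{G_{2}}$ of $\mathcal{C}=[n,k,d_{2}]$, where $G_{2}$ has $k-k_{1}$ rows. Let $B$ be a $(k-k_{1})\times m$ generator matrix of the LCD code $[m,r,d_{3}]$ (so here $r=k-k_{1}$), and set
$$G_{\mathrm{new}}=\begin{pmatrix} G_{1} & \mathbf{0}\\ G_{2} & B\end{pmatrix}.$$
Because $G$ has full row rank $k$, restricting any dependence among the rows of $G_{\mathrm{new}}$ to the first $n$ coordinates shows the rows stay independent, so $G_{\mathrm{new}}$ generates an $[n+m,k]$ code.

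For the minimum distance I would split an arbitrary codeword $(a_{1},a_{2})G_{\mathrm{new}}=(a_{1}G_{1}+a_{2}G_{2},\,a_{2}B)$ according to whether $a_{2}=\mathbf{0}$. If $a_{2}=\mathbf{0}$ and $a_{1}\neq\mathbf{0}$, the word is $(a_{1}G_{1},\mathbf{0})$, a nonzero word of $\mathcal{C}_{1}$ padded by zeros, of weight at least $d_{1}$. If $a_{2}\neq\mathbf{0}$, then $a_{1}G_{1}+a_{2}G_{2}$ is a nonzero word of $\mathcal{C}$ (nonzero since $G$ has full row rank), contributing weight at least $d_{2}$, while $a_{2}B$ is a nonzero word of $[m,r,d_{3}]$ of weight at least $d_{3}$; the two disjoint supports give total weight at least $d_{2}+d_{3}$. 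Hence $d\geq\min\{d_{1},d_{2}+d_{3}\}$. Note this step uses neither the hull hypothesis nor any LCD property.

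The substantive point is that $G_{\mathrm{new}}$ is LCD, and this is precisely where $h(\mathcal{C})=k-k_{1}$ is used. Writing $A=G_{1}G_{1}^{T}$, $C=G_{1}G_{2}^{T}$, $D=G_{2}G_{2}^{T}$ over $F_{2}$, we have $GG^{T}=\bigl(\begin{smallmatrix}A&C\\ C^{T}&D\end{smallmatrix}\bigr)$. Since $\mathcal{C}_{1}$ is LCD, $A$ is an invertible $k_{1}\times k_{1}$ matrix; since $h(\mathcal{C})=k-\mathrm{rank}(GG^{T})=k-k_{1}$, we get $\mathrm{rank}(GG^{T})=k_{1}$. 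The Schur-complement rank identity $\mathrm{rank}(GG^{T})=\mathrm{rank}(A)+\mathrm{rank}\!\left(D-C^{T}A^{-1}C\right)$, valid over any field, then forces $D-C^{T}A^{-1}C=\mathbf{0}$. A direct computation gives $G_{\mathrm{new}}G_{\mathrm{new}}^{T}=\bigl(\begin{smallmatrix}A&C\\ C^{T}&D+BB^{T}\end{smallmatrix}\bigr)$, whose Schur complement with respect to $A$ is $(D+BB^{T})-C^{T}A^{-1}C=BB^{T}$. As $[m,r,d_{3}]$ is LCD, $BB^{T}$ is invertible, so $\mathrm{rank}(G_{\mathrm{new}}G_{\mathrm{new}}^{T})=k_{1}+(k-k_{1})=k$; thus $G_{\mathrm{new}}G_{\mathrm{new}}^{T}$ is invertible and $G_{\mathrm{new}}$ generates an LCD code by Lemma \ref{li3}.

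The one place that needs care, and the step I expect to be the main obstacle, is the vanishing of the Schur complement $D-C^{T}A^{-1}C$ over $F_{2}$: one must justify the rank-additivity identity for a symmetric block matrix with invertible leading block and check that the subsequent cancellation $(D+BB^{T})-C^{T}A^{-1}C=BB^{T}$ is legitimate modulo $2$. Once that cancellation is in hand the LCD conclusion is immediate, and the remaining verifications (full rank of $G_{\mathrm{new}}$ and the weight count) are routine bookkeeping on the juxtaposed generator matrix.
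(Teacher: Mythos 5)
The paper itself gives no proof of this lemma --- it is imported by citation from \cite{Li2022} --- so there is nothing in the text to compare your argument against line by line. Judged on its own, your proof is correct and is the natural juxtaposition construction that the paper's applications (e.g.\ building the $[51,6,24]$--$[64,6,30]$ codes from $[45,2,30]\subseteq[45,6,22]$ together with $[m,4,d_l]$ LCD codes) presuppose. Two small points of reading you handled correctly but should flag explicitly: the statement as printed concludes only that an $[n+m,k,d]$ \emph{code} exists, whereas the paper uses it to produce \emph{LCD} codes, and it does not state $r=k-k_1$; both are clearly intended, and your proof supplies exactly the needed strengthening. The core of your argument --- $A=G_1G_1^T$ invertible from $\mathcal{C}_1$ being LCD, $\mathrm{rank}(GG^T)=k_1$ from the hull hypothesis, hence the Schur complement $D-C^TA^{-1}C$ vanishes, so that $G_{\mathrm{new}}G_{\mathrm{new}}^T$ has Schur complement $BB^T$, which is invertible because the $[m,r,d_3]$ code is LCD --- is sound: the congruence
\begin{equation*}
\begin{pmatrix} I & 0\\ C^TA^{-1} & I\end{pmatrix}\begin{pmatrix} A & 0\\ 0 & D-C^TA^{-1}C\end{pmatrix}\begin{pmatrix} I & A^{-1}C\\ 0 & I\end{pmatrix}=\begin{pmatrix} A & C\\ C^T & D\end{pmatrix}
\end{equation*}
holds over any field (signs are irrelevant over $F_2$), so rank additivity and the cancellation you worried about are both legitimate. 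The distance count is likewise correct, using only full row rank of $G$ and of $B$. In short: a complete, self-contained proof of the cited result.
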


\subsection{Constructions of   optimal  LCD codes $[n,6,d]$ for $51\leq n\leq
64$}

\begin{theorem}There are optimal LCD codes with parameters
$[51,6,24]$, $[52,6,24]$, $[53,6,25]$, $[54,6,26]$, $[55,6,26]$,
$[56,6,26]$, $[57,6,27]$, $[58,6,28]$, $[59,6,28]$, $[60,6,28]$,
$[61,6,29]$, $[62,6,30]$, $[63,6,30]$ and $[64,6,30]$.
\end{theorem}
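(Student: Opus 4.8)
The plan is to realize every entry in the list as an explicit LCD code by the nested-chain construction of Lemma \ref{con}, using the two seed codes $[45,6,22]$ and $[33,6,16]$ together with short optimal LCD codes as building blocks, and then to close the remaining even-distance cases by the one-step extension of Lemma \ref{Bou}. Concretely, to produce a target $[n,6,d]$ I would fix an intermediate code $\mathcal{C}=[n_0,6,d_2]$ obtained from one of the seed codes (or from a juxtaposition of a seed with a simplex code), exhibit inside it an LCD subcode $[n_0,k_1,d_1]$ with $d_1\ge d$, and check that the hull of $\mathcal{C}$ has dimension exactly $6-k_1$. Juxtaposing an $[m,6-k_1,d_3]$ optimal LCD code with $n_0+m=n$ and $d_2+d_3\ge d$ then gives, by Lemma \ref{con}, an LCD code of length $n$, dimension $6$, and minimum distance $d'\ge\min\{d_1,d_2+d_3\}\ge d$.

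For the bookkeeping I would separate the list into cases produced directly and cases obtained from Lemma \ref{Bou}. The odd-distance parameters $[53,6,25]$, $[57,6,27]$ and $[61,6,29]$ I would build directly; each then yields its even-distance successor $[54,6,26]$, $[58,6,28]$ and $[62,6,30]$ for free, since $6$ is even and the constructed minimum distance is odd, so Lemma \ref{Bou} gives $d_l(n+1,6)\ge d_l(n,6)+1$. The remaining parameters $[51,6,24]$, $[52,6,24]$, $[55,6,26]$, $[56,6,26]$, $[59,6,28]$, $[60,6,28]$, $[63,6,30]$ and $[64,6,30]$ I would construct directly by the same scheme, choosing the length-$m$ block so that $n_0+m$ lands on the desired length: lengths $51$ and $52$ arise from the $[45,6,22]$ seed by adjoining short blocks of lengths $6$ and $7$, while the lengths near $63$ arise from the $[33,6,16]$ seed by adjoining larger simplex-type blocks.

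Two things then have to be verified for each constructed generator matrix $G$. First, that the code is genuinely LCD: by Lemma \ref{li3} this is equivalent to $GG^{T}$ being invertible over $F_2$, and the nesting hypothesis $h(\mathcal{C})=6-k_1$ is exactly what makes the juxtaposed block remove the hull. Second, that the minimum distance equals the claimed $d$: the lower bound $d'\ge\min\{d_1,d_2+d_3\}$ comes from Lemma \ref{con}, and the matching upper bound is supplied by optimality, i.e. the nonexistence of an LCD code with distance $d+1$. For every entry this upper bound is already available: for $[51,6,24]$ and the cases where the Griesmer bound forbids $[n,6,d+1]$ altogether one has $d_a(n,6)=d$, while for the remaining lengths the nonexistence of $[n,6,d+1]$ LCD codes is precisely what the theorems of Section \ref{sec3} (and the corollary to Lemma \ref{lilem}) establish; combining the two inequalities gives $d_l(n,6)=d$.

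The main obstacle I expect is not the abstract scheme but the simultaneous concrete realization of all fourteen cases: one must produce, for each target, an intermediate code $\mathcal{C}$ whose hull dimension is exactly $6-k_1$ and which contains an LCD subcode of large enough minimum distance, together with a short optimal LCD block of the prescribed length and dimension, and then verify both the nonsingularity of $GG^{T}$ and the exact minimum distance. Keeping the seeds $[45,6,22]$ and $[33,6,16]$ fixed and attaching a small menu of short blocks is what makes this tractable, but the case-by-case checking of the hull dimension and of $GG^{T}$ being invertible—carried out most cleanly through the defining-vector and anti-code bounds of Section \ref{sec2}—is where the real work lies.
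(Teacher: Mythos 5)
Your proposal takes essentially the same route as the paper: the paper's proof also applies Lemma \ref{con} to a nested chain built from the $[45,6,22]$ seed---specifically $[45,2,30]\subseteq[45,6,22]$ with hull dimension $4$, obtained by deleting from $S_6$ an $18$-column block formed from $S_4$ and $S_2$---and juxtaposes optimal LCD $[m,4,d_l(m,4)]$ codes for $6\le m\le 19$ to cover all fourteen lengths $51\le n\le 64$ at once. The only cosmetic differences are that the paper does not detour through Lemma \ref{Bou} for the even-distance cases and reserves the $[33,6,16]$ seed for the subsequent range $65\le n\le 68$ rather than using it for $n=63,64$.
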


\begin{proof} Consider the three simplex matrices $S_{2}$, $S_{4}$ and  $S_{6}$.  Set $$\mathbf{K}_{6,18}=\left(
       \begin{array}{llllll}
S_{4}&\mathbf{0}_{15 \times 3}\\
\mathbf{0}_{3 \times 15}&S_{2}\\
       \end{array}
     \right)$$
and delete the columns of $\mathbf{K}_{6,18}$ from $S_{6}$. Thus,
one can obtain  $$\mathbf{G}_{6,45}=S_{6}\setminus
\mathbf{K}_{6,18}=\left(
       \begin{array}{cccc}
X\\
Y\\
       \end{array}
     \right),$$
which generates a $[45,6,22]$ code with hull dimension 4, where $X$
is a $4\times 45$ matrix and $Y$ is a $2\times 45$ matrix.  $X$
generates a  $[45,4,28]$ code and $Y$ generates a  $[45,2,30]$ LCD
code. Naturally, we have
 the  nested codes $[45,2,30]$ $\subseteq[45,6,22]$. Let  $[m,4,d_{l}(m,4)]$ be an optimal  LCD code for
      $6\leq m\leq 19$ in \cite{Bouyuklieva2021}.
 %     If $G_{4,m}$ is a generator matrix of  an optimal $[m,4,d_{l} ]$ LCD code for
%      $6\leq m\leq 19$ and let
%$\widehat{G}_{4,m}$ be the $6\times m$ matrix obtain by adding two
%zero rows at the bottom  of $G_{4,m}$. Set
%$\mathbf{G}_{6,45+m}=G_{6,n}=
%(\mathbf{G}_{6,45},\widehat{G}_{4,m})$.

By these  codes $[45,2,30]$, $[45,6,22]$ and  $[m,4,d_{l} ]$, using
Lemma \ref{con}, one can constuct the optimal LCD codes listed  in
the following table.
 \end{proof}

\begin{table}[h!]
\begin{center}

\caption{The optimal LCD $[n,6,d]$ codes  for $51\leq n\leq
64$}\label{tab1} $\begin{tabular}{l|llllllllllllllllllllll}
  \hline
  % after \\: \hline or \cline{col1-col2} \cline{col3-col4} ...
  $n$ & 51 & 52 & 53 & 54 & 55 & 56 & 57 & 58& 59& 60 & 61 & 62 & 63 &64 \\
  \hline
  $d_{l}$ & 24 & 24& 25& 26 & 26& 26& 27 &28 & 28 &28 &29 & 30 &30 & 30 \\
  \hline
\end{tabular}$
 \end{center}
\end{table}

\subsection{Constructions of  optimal LCD codes $[n,6,d]$ for $65\leq n\leq
68$}

\begin{theorem} There are optimal LCD codes with parameters $[65,6,31]$,
$[66,6,32]$, $[67,6,32]$ and $[68,6,32]$.

\end{theorem}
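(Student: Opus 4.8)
```latex
The plan is to construct these four optimal LCD codes using the same
nested-code machinery employed in the previous theorem, namely Lemma
\ref{con}, starting from the building blocks $[45,2,30]\subseteq[45,6,22]$
with $h([45,6,22])=4$ already established in the proof of the preceding
theorem. First I would note the target distances: since $65=63+2$,
$66=63+3$, $67=63+4$ and $68=63+5$, the relevant $d_{a}$ values (from
Table 1 shifted by $32$, i.e. $d_{a}(63s+t,6)=32s+d_{a}(t)$ with $s=1$)
are $d_{a}(65,6)=31$, $d_{a}(66,6)=32$, $d_{a}(67,6)=32$,
$d_{a}(68,6)=32$. Thus each claimed distance equals $d_{a}$, so it
suffices to exhibit an LCD code meeting it; optimality is then immediate.

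The core construction step is to apply Lemma \ref{con} with
$[n,k_{1},d_{1}]=[45,2,30]$, $[n,k,d_{2}]=[45,6,22]$ (so $k-k_{1}=4=
h([45,6,22])$, meeting the hypothesis), and an appropriate optimal LCD
code $[m,4,d_{l}(m,4)]$ of short length $m=20,21,22,23$. The conclusion
then yields a $[45+m,6,d]$ LCD code with $d\geq\min\{30,22+d_{l}(m,4)\}$.
To hit the targets I would select $m$ and the corresponding optimal LCD
$[m,4,d_{l}(m,4)]$ codes (available from \cite{Bouyuklieva2021} or the
tables in \cite{Grassl}) so that $22+d_{l}(m,4)$ reaches at least the
desired distance while the $\min$ with $30$ does not cut it short for
$n=65$; for $n=66,67,68$ the bound $\min\{30,22+d_{l}(m,4)\}=32$ forces
$22+d_{l}(m,4)\geq 32$, i.e. $d_{l}(m,4)\geq 10$ for $m=21,22,23$, which
holds for the short optimal LCD codes of dimension $4$. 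The value
$n=65$ needs $d=31$, requiring $d_{l}(20,4)\geq 9$, again available.

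I expect the main obstacle to be the case $n=65$, where the cap
$\min\{\cdot,30\}$ coming from the $[45,2,30]$ subcode interacts
delicately with the arithmetic: I must verify that the chosen
$[20,4,d_{l}(20,4)]$ LCD code gives $22+d_{l}(20,4)\geq 31$ \emph{and}
that the resulting guaranteed distance is not prematurely capped below
$31$ by the first term, so a careful check that $d_{l}(20,4)\geq 9$ (not
merely $\geq 8$) is the crux. A secondary point is to confirm the hull
hypothesis $h([45,6,22])=k-k_{1}=4$ is exactly what Lemma \ref{con}
demands for the nested pair $[45,2,30]\subseteq[45,6,22]$; this was
recorded in the previous theorem but should be restated. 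Once these
verifications are in place, each application of Lemma \ref{con} produces
an LCD code matching $d_{a}$, establishing optimality, and I would
collect the four resulting parameter sets to complete the proof.
```
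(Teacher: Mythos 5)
Your construction cannot prove this theorem: it is blocked by exactly the ``cap'' you flag as the main obstacle, and that obstacle is not surmountable with the nested pair you chose. Lemma \ref{con} applied to $[45,2,30]\subseteq[45,6,22]$ guarantees only $d\geq\min\{30,\,22+d_{3}\}\leq 30$, no matter which optimal LCD $[m,4,d_{l}(m,4)]$ code you adjoin. Since all four target distances are $31$ or $32$, the bound can never certify any of them; the issue is not whether $d_{l}(20,4)\geq 9$ but that the first term of the minimum is $30<31$. (The $[45,2,30]$ subcode is the right tool for lengths $51$--$64$, where the targets are at most $30$, which is why the paper uses it in the preceding theorem and then abandons it here.)

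The paper's proof replaces the nested pair by one whose subcode has large enough distance: it takes the $[33,6,16]$ code generated by
$\mathbf{K}_{6,33}=\left(\begin{array}{ll}\mathbf{1}_{2}&\mathbf{1}_{31}\\ \mathbf{0}_{5\times 2}&S_{5}\end{array}\right)$,
which has hull dimension $5$ and contains the repetition code $[33,1,33]$. Applying Lemma \ref{con} with this pair and optimal LCD codes $[m,5,d_{l}(m,5)]$ for $32\leq m\leq 35$ gives $[33+m,6,d]$ LCD codes with $d\geq\min\{33,\,16+d_{l}(m,5)\}$, i.e.\ $[65,6,31]$, $[66,6,31]$, $[67,6,32]$, $[68,6,32]$; the $[66,6,31]$ code is then upgraded to $[66,6,32]$ via Lemma \ref{Bou}. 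To repair your argument you would need to switch to a nested pair whose inner LCD subcode has minimum distance at least $32$, which is precisely what the $[33,1,33]\subseteq[33,6,16]$ choice provides.
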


\begin{proof} Consider the  simplex matrix $S_{5}$.
Set $$\mathbf{K}_{6,33}=\left(
       \begin{array}{lllll}
\bf{1_{2}}&\bf{1_{31}}\\
\mathbf{0}_{5 \times 2}&S_{5}\\
       \end{array}
     \right)
$$ and $\mathbf{K}_{6,33}$ generates a $[33,6,16]$
code $\mathcal{C}$, which has a reduced SO code  $[32,5,16]$ and
$$h(\mathbf{K}_{6,33})=6-rank(\mathbf{K} \mathbf{K}^{T})=6-1=5.$$
Obviously there is an LCD code $[33,1,33] \subseteq [33,6,16]$. Let
$[m,5,d_{l}(m,5)]$ be an optimal  LCD code for
      $32\leq m\leq 35$ in \cite{liuar}.

By these  codes $[33,1,33]$, $[33,6,16]$ and  $[m,5,d_{l} ]$, using
Lemma \ref{con}, one can obtain  these  LCD codes $[65,6,31]$,
$[66,6,31]$, $[67,6,32]$ and $[68,6,32]$, respectively.

Further, one can know an optimal $[66,6,32]$ LCD code exists by the
existence of a $[65,6,31]$ LCD code by Lemma \ref{Bou}. Combining
with the above constructions, this lemma holds. For clarity, we list
them in the following table.

\begin{table}[h!]
\begin{center}
\caption{The optimal LCD $[n,6,d]$ codes for $65\leq n\leq 68$
}\label{table3}

$\begin{tabular}{l|llllllllllllllllllllll}
  \hline
  % after \\: \hline or \cline{col1-col2} \cline{col3-col4} ...
  $n$ &  65 & 66 & 67 &68 \\
    \hline
  $d_{l}$ & 31 & 32 &32 & 32 \\
  \hline
\end{tabular}$

 \end{center}
\end{table}
\end{proof}

\subsection{Constructions of (near) optimal LCD codes $[n,6,d]$ for $n=63s+t\ge68$}

When $6\le t \le 68$, the optimal LCD codes with  $[t, 6,
d_{l}(t,6)]$ can be obtained in Refs.
\cite{Grassl,Galvez2018,Harada2019,Fu2019,Araya2020,Bouyuklieva2021,Wang2023}
 and the constructions in the above two subsections.
By the juxtaposition of $s$ simplex codes $[63, 5, 32]$
 and an optimal LCD code $[t, 6, d_{a}(t,6)]$ for $6\le t \le 68$, one can easily obtain all $[n=63s+t, 6, 32s+d_{l}(n,6)]$ (near) optimal
 LCD codes for $n\ge68$. For clarity, we give some examples as
 follows.

\begin{example} When $n=131=63\times1+68$, there exits an optimal
$[131,6,32\times1+32]$ LCD code saturating the Griesmer bound, which
can be constructed by  the juxtaposition of one simplex code $[63,
5, 32]$ and an optimal LCD code $[68,6,32]$. It is obviously also an
optimal code.
\end{example}

\begin{example} When $n=136=63\times 2+10$, there exits a
$[136,6,32\times2+3]$ LCD code, which can be constructed by the
juxtaposition of two simplex codes $[63, 5, 32]$ and an optimal LCD
code $[10,6,3]$. By Theorem 2, we can know there is not
$[63s+10,6,32s+4]$ LCD codes. It naturally follows that the
$[136,6,32\times 2+3]$ LCD code is optimal LCD code, which is also a
near optimal code in Ref. \cite{Grassl}.
\end{example}

\begin{example} When $n=206=63\times3+17$, there exits a
$[206,6,32\times 3+6]$ LCD code, which can be constructed by the
juxtaposition of three simplex codes $[63, 5, 32]$ and an optimal
LCD code $[17,6,6]$. By Theorem 3, we can know there is no
$[63s+17,6,32s+7]$ LCD code. It naturally follows that a
$[206,6,32\times3+6]$ LCD code is optimal LCD, whose minimum
distance is smaller by 2 than the corresponding optimal code in Ref.
\cite{Grassl}.
\end{example}

\begin{example} When $n=273=63\times 4+21$, there exits a
$[273,6,32\times4+8]$ LCD code, which can be constructed by the
juxtaposition of four simplex codes $[63, 5, 32]$ and an optimal LCD
code $[21,6,8]$.  One can infer that it is at least near optimal LCD
and near optimal because the optimal code has the parameter
$[273,6,32\times4+9]$, which can be derived from Ref. \cite{Grassl}.
\end{example}

\section{Conclusion}\label{sec5}
 By the theories of
defining vectors, generalized anti-codes, reduced codes and  nested
code chains, the nonexistence and constructions of LCD codes  have
been studied in last two sections for $n=63s+t\ge 51$. To conclude
the  above results,  $d_{l}(n,6)$ has been exactly determined for $0
\leq t \leq 62$ and $t \notin \{21,22,25,26,33,34,37,38,45,46\}$. We
have also showed that $ d_{l}(n,6)\in$$\{d_{a}(n,6)$
$-1,d_{a}(n,6)\}$ for $t\in \{21,22,26,34,37,38,46\}$ and $
d_{l}(n,6)$ $\in \{d_{a}(n,6)-2,$ $d_{a}(n,6)-1\}$ for
$t\in\{25,33,45\}$.

\begin{table}[h!]
\begin{center}
\caption{ $[n,6,32s+d_t]$ codes with $n=63s+t\ge42$}\label{42}

$\begin{tabular} {lllllllllllllllllll}
 \hline \hline
t&$  0 $  &$   1 $ & $   2 $&$   3 $ &$   4 $&$   5 $&$   6 $&$   7
$&$   8 $&$   9 $&
 $   10 $\\
\hline
 $d_a$&$  0 $  & $  0 $ & $   0 $ & $   0 $ & $  0  $ &
$   0 $ & $   1 $ &  $   2 $&
$   2 $& $   3 $ & $   4 $ \\
 $d_l$&$   -2 $  & $   -2 $ & $   -1 $ & $  0  $ & $  0  $ &
$   0 $ &$   1 $&$    2 $&
$   2 $& $   2 $ & $   3 $ \\
 \hline \hline
  t&&$   11 $&$   12 $&$   13 $& $   14 $
&$   15 $&$   16 $ &  $   17 $&$   18 $
&$   19 $&  $   20 $\\
\hline
 $d_a$&&   $   4 $& $   4 $& $   5 $&  $   6 $&$   6 $&  $   7 $ & $   8 $ & $   8 $ & $   8 $ & $   8 $
  \\
 $d_l$&& $   4 $& $   4 $& $   4 $&  $   5 $& $   6 $& $   6 $ & $   6 $ & $   7 $ &
$   8 $ &$   8 $
  \\
 \hline \hline
t&&$   21 $&$   22 $&$   23 $&$   24 $&
 $   25 $&$   26 $&$   27 $&$   28 $&$   29 $ &$   30 $\\
\hline
 $d_a$&&   $   9 $&
$   10 $& $   10 $& $   11 $ & $   12 $& $   12 $& $   12 $& $ 13
$&$   14 $& $   14 $
  \\
 $d_l$&&   $   8/9 $&
$   9/10 $& $   10 $ & $   10 $& $   10/11 $& $   11/12 $& $   12 $&
$   12 $& $   12 $& $   13 $\\
 \hline \hline

t&&$   31 $&$   32 $ &$   33 $&$   34 $&$   35 $  &$   36 $ & $ 37
$&$   38 $ &$   39 $&
$   40 $\\
\hline
 $d_a$& & $   15 $& $   16 $& $   16 $&$   16 $&$   16 $  & $   16 $ & $   17 $ & $   18 $ & $   18 $ &
$   19 $
  \\
 $d_l$& & $   14 $& $   14 $&
$   14/15 $&$   15/16 $&$   16 $  & $   16 $ & $   16/17 $ & $ 17/18
$ & $   18 $ &
$   18 $\\
 \hline \hline

  n&&$   41 $&$   42 $&$   43 $&$   44 $&
 $   45 $ &$   46 $&$   47 $&$   48 $& $   49 $
&$   50 $\\
\hline
 $d_a$&& $   20 $&  $   20 $&$   20 $& $   21 $ &
$   22 $& $   22 $& $   23 $& $   24 $&  $   24 $& $   24 $
  \\
 $d_l$&&$   19 $&$   20 $&
$   20 $& $   20 $ & $   20/21 $ & $   21/22 $& $   22 $& $   22 $&
$ 23 $& $   24 $
  \\
 \hline \hline
   n&&$   51 $ &  $   52 $&$   53 $
&$   54 $&  $   55 $&$   56 $&$   57 $&$   58 $&$   59 $&
 $   60 $\\
\hline
  $d_a$& &  $   24 $ & $   25 $ & $   26 $ & $   26 $ & $   27 $&   $   28 $&
$   28 $& $   28 $& $   29 $ & $   30 $&
  \\ $d_l$ && $   24 $ & $   24 $ & $   25 $ &
$   26 $ &$   26 $&   $   26 $&
$   27 $& $   28 $ & $   28 $& $   28 $&\\
 \hline \hline
n&&$   61 $&$   62 $\\
\hline
 $d_a$&& $   30 $& $   31 $
  \\
 $d_l$&&$   29 $&
$   30 $\\
 \hline \hline
\end{tabular}$

\qquad  $d_{t }$ denotes $d_a$ or $d_l$ in different lines and the
left slash``/" means ``or". For example, ``14/15" implies the
largest minimum distance of  $[63s+33,6]$ LCD codes is $32s+14$ or
$32s+15$. \qquad \qquad \qquad \qquad \qquad \qquad \qquad \qquad
\qquad \qquad \qquad \qquad \qquad \qquad \qquad \qquad \qquad
\qquad \qquad
 \end{center}
\end{table}

 From Ref. \cite{Grassl}, it
is easy to know all optimal [n,6] linear codes can achieve the
Griesmer bound for $42 \leq n \leq 256$. For $n>256$, the length $n$
can be denoted as
 $n=63s+t$,  where $s\geq 3$ and $63\leq t\leq 125$ are integers.
 By the juxtaposition of $s$ simplex codes $[63,5,16]$
 and an optimal linear code $[t,6, d_{a}(t,5)]$, one can  obtain  all $[n, 6, d_{a}(n,6)]$ optimal
 linear codes with $d_{a}(n, 6)$ achieving the Griesmer bound for $
 n>256$. That is to say any $d_{a}(n, 6)$ can meet the Griesmer
 bound for all lengths $n\geq 42$.

Let $d_{a}$ and $d_{l}$ be defined as the previous sections, we list
optimal linear codes and (near) optimal LCD codes in Table \ref{42}
for $n\geq 42$. Among them, $d_l$ of (near) optimal LCD codes of
lengths $n\geq 51$  are obtained in this paper.

\begin{remark}

From Table \ref{42},  these codes can be clearly divided into four
groups:

 Let $n=63s+t\geq 42$, the following items hold:

(i) For $t$=3, 4, 5, 6, 7, 8, 11, 12, 15,  19, 20, 23, 27, 35, 36,
39, 42, 43,  50, 51, 54, 58,  the corresponding $[n,6,d_{l }]$
optimal LCD codes are also optimal linear codes.

(ii) For $t$=2, 9, 10, 13, 14, 16, 18, 21, 22,  24, 25, 26, 28, 30,
31, 34, 37, 38, 40, 41, 42,  44,  46, 47, 49, 52, 53, 55, 57, 59,
61, 62, the corresponding $[n,6,d_{l }]$  LCD codes are near optimal
linear codes, i.e. $d_{l }=d_{a }-1$.  If $t\neq$  21, 22, 26,  34,
37, 38, 46,  the  $[n,6,d_{l }]$ codes are  optimal LCD codes.

 (iii) For $t$=0, 1, 17, 25, 29, 32, 33, 45, 48, 56, 60,  the
corresponding $[n,6,d_{l }]$  LCD codes are not near optimal linear
codes and $d_{l }=d_{a }-2$. If $t\neq$  25, 33, 45,  the $[n,6,d_{l
}]$ codes are  optimal  LCD codes.

 (iv) For each $t$= 21, 22, 25, 26, 33,  34, 37, 38, 45, 46,
  the corresponding $[n,6,d_{l }]$ is at least  near optimal LCD
codes and  the optimal LCD codes  still can not determined.  We
sincerely look forward to the completion of the follow-up work. When
the dimensions of LCD codes are higher, the approach in the paper
can be also employed, one can further study optimal LCD codes with
higher dimensions.  It is hoped that  more results about optimal LCD
codes will be obtained by more scholars in future study.

\end{remark}

%% The Appendices part is started with the command \appendix;
%% appendix sections are then done as normal sections
%% \appendix
%% \section{}
%% \label{}
%% References
%%
%% Following citation commands can be used in the body text:
%% Usage of \cite is as follows:
%%   \cite{key}          ==>>  [#]
%%   \cite[chap. 2]{key} ==>>  [#, chap. 2]
%%   \citet{key}         ==>>  Author [#]

%% References with bibTeX database:

\bibliographystyle{model1a-num-names}
%\bibliography{<your-bib-database>}

\begin{thebibliography}{00}
% Produces the bibliography via BibTeX.


\bibitem {Assmus}
Assmus, E. F.,  Key,  J. D., Affine and projective planes, Discrete
Math., 83:161-187, 1990.


%\bibitem{Lv2015} L. Lv, R. Li, L. Guo and Q. Fu. Maximal entanglement
%entanglement-assisted quantum codes constructed from linear codes,
%Quantum Inf. Processing, vol.14, pp.165-182, 2015.

\bibitem{massey1964}  J. L. Massey, Reversible codes. Inf. Control, 7:369-380, 1964.

\bibitem{massey1992}
J. L. Massey, Linear codes with complementary duals, Discrete Math.,
106:337-342, 1992.

\bibitem{Yang1994} X. Yang, J. L.Messey,  The condition for a cyclic code to have a
complementary dual, Discrete Math., 126(1-3):391-393, 1994.

\bibitem{Sendrirer2004} N. Sendrirer, Linear codes with complementary duals meet the
Gilbert-Varshamov bound,  Discrete Math., 285(1-3):345-347, 2004.

\bibitem{carlet2015}
C. Carlet and S. Guilley, Complementary dual codes for
countermeasures to side-channel attacks, in Coding Theory and
Applications,  97-105, Springer, 2015.

\bibitem{carlet2018}
C. Carlet, S. Mesnager, C. Tang, Y. Qi, R. Pellikaan, Linear codes
over $F_{q}$ are equivalent to LCD codes for $q>3$, IEEE Trans.
Information Theory, 64:3010-3017, 2018.



\bibitem{Grassl}
M. Grassl, Code Tables: Bounds on the parameters of various types of
codes, http://www.codetables.de/.


\bibitem{Galvez2018}
 L. Galvez, J. L. Kim, N. Lee, Y. G. Roe and B.-S. Won,
Some bounds on binary LCD codes, Cryptogr. Commun., 10:719-728,
2018.
\bibitem{Harada2019}
M. Harada and K. Saito, Binary linear complementary dual codes,
Cryptogr. Commun., 11(4):677-696, 2019.

\bibitem{Fu2019}
Q. Fu, R. Li, F. Fu and Y. Rao, On the construction of binary
optimal LCD codes with short length. Int. J. Found. Comput. Sci.,
30:1237-1245, 2019.

%\bibitem{Jaffe2017} D. Jaffe, http://www.math.unl.edu/~djaffe/\#coding. Accessed 16
%February, 2017.


\bibitem{Araya2020}
M. Araya and M. Harada, On the minimum weights of binary linear
complementary dual codes, Cryptogr. Commun., 12(2):285-300, 2020.

\bibitem{Bouyuklieva2021}
S. Bouyuklieva, Optimal binary LCD codes, Designs, Codes, Cryptogr.,
89(11):2445-2461, 2021.


\bibitem{Wang2023}
G. Wang, S. Liu, and H. Liu, New Constructions of Optimal Binary LCD
Codes, arXiv:2302.00906, 2023.

\bibitem{liuar} Liu Y., Li R., Fu Q., Song H.: Minimum distances of binary optimal
LCD codes of dimension five are completely determined,
arXiv:2210.05238v1, 2023.


\bibitem{Shi22} S. Li and  M. Shi, Improved lower and upper bounds for LCD
codes, arXiv:2206.04936, 2022.

\bibitem{Araya2021char}
M. Araya, M. Harada, and K. Saito, Characterization and
classification of optimal LCD codes, Designs, Codes Cryptogr.,
89(4):617-340, 2021.



\bibitem{Araya2021} M. Araya, M. Harada and Ken Saito, On the minimum weights
of binary LCD codes and ternary LCD codes,  Finite Fields  Th. App.,
76:01925, 2021.


\bibitem{Li2022}
R. Li, Y. Liu and Q. Fu, On some problems of LCD codes, in 2022
Symposium on Coding theory and Cryptogram and their related topics,
Apr. 2022. China.

\bibitem{Li2023}
R. Li, Q. Fu, H. Song and Y. Liu, Binary generalized anti-code and
optimal LCD code, Journal of Air Force Engineering University
(Natural Science Edition), prepringt(To be published), 2023.


\bibitem{bro0}
A. E. Brouwer. Bound on the size of linear codes. In Handbook of
Coding Theory[M]. Amsterdam, The Netherlands:Elsevier, 1998:
295-432.


\bibitem {Huffman} Huffman, W. C., Pless, V.  Fundamentals of Error-Correcting
Codes, Cambridge University Press, 2003.

\bibitem{Zuo2011}
F. Zuo, on two problems of codes (In Chinese), Master thesis of Air
Force Engineering University, 2011.


\bibitem{Zuo2}
F. Zuo, R. Li, Weight distributation of binary optimal codes and its
application. Int. Conference on Opto-Electronics Engineering and
Information Science, 2011.


\bibitem{Li2008}
R. Li, Z. Xu, X. Zhao. On The Classification of Binary Optimal
Self-orthogonal Codes. IEEE Transactions on Information Theory, 54:
3778-3782, 2008.


\bibitem{Mac60} J. E. MacDonald, Design methods for maximum minimum-distance error-correcting codes,
 IBM J. Res. Dev., 4:43-57, 1960.

\bibitem{Bou2006}
I. Bouyukliev, On the binary projective codes with dimension 6, Dis.
Appl. Math., 154: 1693-1708, 2006.


\bibitem{Far1970} P. Farrell, Linear binary anticodes, Electron. Lett. 6(13):419-421
1970.

%\bibitem{MATLAB}
%The MathWorks, MATLAB R2006a, Natick, MA, 2006.
%
%\bibitem{Bosma}
%W. Bosma, J. Cannon, and C. Playoust, The magma algebra system I:
%The user language,  Journal of Symbolic Computation, vol. 24, no.
%3-4, pp. 235-265, 1997.



%1-16   M. Harada, K. Saito, ¡±Binary linear complementary dual
%codes,¡± Cryptogr. Commun. vol. 11, pp. 677¨C696, 2019
%
%
%16-24 fuqiang ,Jaffe
%
%Araya, M., Harada, M. On the minimum weights of binary linear
%complementary dual codes. Cryptogr. Commun. 12, 285¨C300 (2020).
%
%fuqiang ,Jaffe Stefka, Bouyuklieva
%
%2020  Optimal Binary LCD Codes 25-40


\end{thebibliography}

%% Authors are advised to submit their bibtex database files. They are
%% requested to list a bibtex style file in the manuscript if they do
%% not want to use model1a-num-names.bst.

%% References without bibTeX database:

\section*{Acknowledgements}

This work is supported by Natural Science Foundation of Shaanxi
Province under Grant Nos.2024JC-YBMS-055, 2023-JC-YB-003,
2023-JC-QN-0033, National Natural Science Foundation of China under
Grant No.U21A20428.

\end{document}